\newtheorem{theorem}{Theorem}[section]
\newtheorem{corollary}[theorem]{Corollary}
\newtheorem{lemma}[theorem]{Lemma}
\newtheorem{definition}{Definition}
\newtheorem{example}{Example}
\newtheorem*{remark*}{Remark}
\newtheorem*{notation*}{Notation}
\newtheorem*{observation*}{Observation}
\newtheorem*{theorem*}{Theorem}
\newtheorem*{definition*}{Definition}
\newtheorem*{axiom*}{Axiom}
\newtheorem*{claim*}{Claim}
\newtheorem*{lemma*}{Lemma}
\title{Improved Two Sample Revenue Guarantees via\\ Mixed-Integer Linear Programming}
\author{Mete \c{S}eref Ahunbay \and Adrian Vetta}
\begin{document}

\maketitle

\begin{abstract}
We study the performance of the Empirical Revenue Maximizing (ERM) mechanism in a single-item, single-seller, single-buyer setting.
We assume the buyer's valuation is drawn from a regular distribution $F$ and that the seller has access to {\em two} independently 
drawn samples from $F$. By solving a family of mixed-integer linear programs (MILPs), the ERM mechanism is
proven to guarantee at least $.5914$ times the optimal revenue in expectation.
Using solutions to these MILPs, we also show that the worst-case efficiency of the ERM 
mechanism is at most $.61035$ times the optimal revenue. These guarantees improve upon
the best known lower and upper bounds of $.558$ and $.642$, respectively, of Daskalakis and Zampetakis~\cite{DZ20}.
\end{abstract}

\section{Introduction}

We study a primitive setting in revenue maximization: there is a single seller wishing to 
sell a single item to a single buyer, where the buyer's valuation for the item is drawn from a \emph{regular} 
distribution $F$ on $[0,\infty)$. Further, we incorporate the now widespread supposition that the 
valuation distribution $F$ is \emph{unknown to the seller}. Specifically, we present quantitative expected revenue
guarantees when the seller is allowed access to {\bf two} random, independently drawn sample valuations from $F$ 
before she selects a mechanism by which to sell the item.

When $F$ is known to the seller, Myerson~\cite{Myerson81} showed that the optimal mechanism the seller 
can implement is a \emph{posted price mechanism}. In a posted price mechanism, the seller 
chooses a price $p$ and the buyer decides to either buy the item or not. 
Of course, under the implementation of such a mechanism, the buyer would purchase the item if and only if his 
valuation for the item is greater than $p$. Given this, the seller simply picks a price $p$ which maximizes her expected 
revenue. Formally, denoting the probability she sells the item for price $p'$ as $1-F(p')$, the seller picks a
price $p \in \max_{p' \in \mathbb{R}_+} p' \cdot (1-F(p'))$.

But what about the case when $F$ is unknown to the seller? When the seller has \emph{sample access} to $F$, the natural approach 
is for the seller to assume the buyer's valuation distribution is given by the \emph{empirical distribution} $\hat{F}$ induced by
the set of samples; she may then simply implement the optimal mechanism of Myerson~\cite{Myerson81} using the empirical
distribution. This method, called the \emph{Empirical Revenue Maximising} (ERM) mechanism, provides surprisingly good performance
guarantees even in the case of a single sample. Specifically, Dhangwatnotai et al.~\cite{DRY10} showed that for the ERM mechanism just 
one sample suffices to give a $\frac12$-approximation to the optimal revenue. Huang et al.~\cite{HYR15} showed that this factor $\frac12$ bound 
is tight for any {\em deterministic} mechanism. In contrast, Fu et al.~\cite{FILS15} gave a {\em probabilistic} mechanism
obtaining at least $\frac12 + 5 \cdot 10^{-9}$ times the optimal revenue using a single sample. 

On the other hand, another line of work studies the performance of the ERM mechanism with respect to \emph{sample complexity}. 
This asks how many samples are necessary and/or sufficient to obtain a $(1-\epsilon)$-approximation of the optimal revenue, in expectation or with high probability. Dhangwatnotai et al.~\cite{DRY10} noted that even 
in our simple setting, the ERM mechanism does \emph{not} provide distribution independent polynomial sample complexity bounds; however, 
a \emph{guarded} variant of the ERM mechanism which ignores an $\epsilon$ fraction of the largest samples does produce a $(1-\epsilon)$-approximate 
reserve price with probability $(1-\delta)$  given $\Omega\left(\epsilon^{-3} \cdot \ln \left( \frac{1}{\epsilon\delta}\right)  \right)$ 
samples. Later, Huang et al.~\cite{HYR15} showed that any pricing algorithm that obtains a $(1-\epsilon)$-approximation 
of the optimal revenue requires $\Omega(\epsilon^{-3})$ samples, implying the factor $\epsilon^{-3}$ 
in the sample complexity result of~\cite{DRY10} is tight. For more on the sample complexity of the ERM
mechanism and its variants, see~\cite{ABGMMY17,CR14,DHP16,GN17,GW18,RS16,GHZ19}.

Motivated by the gap in our knowledge on sample complexity between the cases of a large number of samples and a single sample, 
Babaioff et al.~\cite{BGMM18} asked for revenue 
guarantees (in expectation) for a fixed number of samples $\ge 2$. Through a very rigorous case analysis they proved that, for \emph{two} samples, the 
ERM mechanism breaks the factor $\frac12$ barrier.
Specifically, it guarantees at least $.509$ times the optimal revenue in expectation. 
Significant improvements in revenue guarantees were then provided by Daskalakis and Zampetakis~\cite{DZ20}.
They showed that with two samples a \emph{rounded} version of the ERM mechanism obtains in 
expectation at least $.558$ times the optimal revenue. To achieve this they constructed a 
family of SDPs whose solutions provide lower bounds on the performance of the rounded ERM mechanism. Furthermore, through their 
primal solutions, they also showed that there exists a distribution of the buyer's valuation for which, with two 
samples, the ERM mechanism obtains in expectation at most $.642$ times the optimal revenue.

\subsection{Our Results}

In this paper, we study the ERM mechanism with two samples by building upon the optimization perspective of Daskalakis 
and Zampetakis~\cite{DZ20}. However, rather than an SDP-based framework we use an MILP-based framework to inspect 
the performance of the ERM mechanism in the single seller, single buyer, single item setting. 
This approach may seem impractical at first glance, given the hardness of mixed-integer linear programming
and the fact that, in general, there are no short certificates of optimality for solutions of MILPs. 
Our key technical contribution therefore is to present an MILP to bound the performance of the ERM mechanism
and which, despite the presence of $\geq 1000$ binary variables, can be approximately solved in a reasonable amount of time
with {\em provably} small error guarantees. This allows us to prove the ERM 
mechanism obtains at least $.5914$ times the optimal revenue. Furthermore, primal solutions to our MILPs show that 
there is a distribution $F$ for the buyer such that the ERM mechanism obtains at most $.61035$ times 
the optimal revenue. Consequently, we improve the lower bound on the revenue guarantee from
$.558$ to $.5914$ and the upper bound from $.642$ to $.61035$. Note also that we specifically 
analyze the ERM mechanism rather than the rounded version of ~\cite{DZ20}. Thus, in particular, we improve the
lower bound on guaranteed revenue under the ERM mechanism with two samples from the $.509$ of 
Babaioff et al.~\cite{BGMM18} to $.5914$.

\subsection{Overview of Paper}

In Section~\ref{sec:prelim} we present the problem of finding the worst case performance of the ERM mechanism 
with two samples. To motivate our MILP formulation, we prove the formulation of the problem in price space and 
quantile space are equivalent, and that for a given distribution $F$ of the buyer's valuation, the performance 
of the ERM mechanism can be calculated via an integral on $[0,1]^2$. In Section~\ref{sec:prog}, we present a 
class of MILPs which approximate or underestimate this integral, allowing us to compute provable upper and lower 
bounds on the performance of the ERM mechanism in our setting.  In Section~\ref{sec:res}, we show the resulting 
upper and lower bounds obtained by numerically solving these MILPs. Open problems and future directions 
are discussed in the conclusion.

\section{Preliminaries}\label{sec:prelim}

There are two agents: a seller and a buyer. The seller wishes to sell a single item to the buyer, 
whose valuation $v$ is drawn from a distribution $F$. To do so, the seller runs a posted price mechanism -- the 
seller commits to a price $p$, and the buyer can either take it or leave it. The buyer is utility maximizing, and his utility is 
quasilinear in payment. In particular, the buyer purchases the item if and only if $v \geq p$. Further, we make the standard
assumption that the distribution of the buyer's valuation, $F$, is \emph{regular}. A distribution $F$ on $\mathbb{R}_+$, 
given by its cumulative distribution function $F : \mathbb{R}_+ \rightarrow [0,1]$, is called regular if its revenue 
curve $R(q) = (1-q) \cdot F^{-1}(q)$ is concave on $(0,1)$. The objective of the seller is to maximize her revenue, but the 
distribution $F$ is unknown to her. Instead she must select the posted price based upon (two) independently 
drawn samples from~$F$.

\subsection{The Empirical Revenue Maximizing Mechanism}\label{sec:ERM}
Recall, by Myerson~\cite{Myerson81}, when $F$ is regular the optimal revenue mechanism for the seller is a posted price 
mechanism with price $p \in \arg \max_{p' \in \mathbb{R}_+} p' \cdot (1-F(p'))$. However, to implement such a mechanism, 
the seller would need to know $F$. But, without knowledge of $F$, how could she implement such a mechanism given only two 
independent samples, say $s \geq t$, drawn from $F$? We assume the seller does the obvious and 
implements the \emph{Empirical Revenue Maximizing} (ERM) mechanism. That is,
she simply posts a price which maximizes her expected revenue with respect to 
the \emph{empirical distribution} $\bar{F}$ she obtains via her two samples. Namely,
\begin{equation*}
\bar{F}(p) = \begin{cases}
0 & p < t \\
1/2 & t \leq p < s \\
1 & s \leq p
\end{cases}
\end{equation*}
Thus the seller sets price $p = t$ if $s < 2t$, and sets price $p = s$ if $s > 2t$. If $s = 2t$, since we are interested in 
worst case revenue, we may assume that the seller picks $p \in \{s,t\}$ which minimizes $p \cdot (1-F(p))$. Denote the 
expected revenue from posting price $p$ by $r(p) = p \cdot (1-F(p))$. Next, let the \emph{bisample expected revenue} $\psi_F(\cdot,\cdot)$ 
be defined as follows. When $s \geq t$, set
\begin{equation}
\psi_F(s,t) = r(s) \cdot \mathbb{I}(s > 2t) + r(t) \cdot \mathbb{I}(s < 2t) + \min\{r(s),r(t)\} \cdot \mathbb{I}(s = 2t),
\end{equation}
and, when $s < t$, set $\psi_F(s,t) = \psi_F(t,s)$. Then the seller's revenue for implementing the ERM mechanism is exactly:
\begin{equation}\label{eqn:priceSpace}
\bar{r}_F = \int_{(s,t) \in \mathbb{R}^2_+}\psi_F(s,t) \cdot dF(s) \times dF(t)
\end{equation}
In turn, the optimal revenue for distribution $F$ is given by $r_F = \max_{p \in \mathbb{R}_+} p \cdot (1-F(p))$. In this paper, 
we are interested in providing lower and upper bounds for the relative performance of the ERM mechanism and the 
optimal mechanism, $\alpha = \inf_{F | F \text{ is regular}} \bar{r}_F / r_F$.

\subsection{Revenue Curves and the Quantile Space}
Following Daskalakis and Zampetakis~\cite{DZ20}, we will deriving our bounds on $\alpha$ via a reduction to a set of optimization 
programs for which we compute solutions. Importantly, however, we make a very different choice of variables.
Specifically, we choose to work in the {\em quantile space} (i.e. with the revenue curve) rather than working in the {\em price space} 
(i.e. with the PDF/CDF of the distribution directly).
Towards this end, note first that if~$R$ is the revenue curve of the distribution~$F$ then $R(q) = (1-q) \cdot F^{-1}(q)$ for any $q \in [0,1]$. 
Therefore, for each $q \in [0,1]$, the revenue curve provides the \emph{price inverse} of $q$:
\begin{align*}
F^{-1}(q) = \begin{cases}
R(q)/(1-q) & q \in [0,1) \\
\lim_{q' \rightarrow 1^-} R(q')/(1-q') & q = 1
\end{cases}
\end{align*}

Via the price inverse, we may define the \emph{bisample revenue function} $\phi_R(\cdot,\cdot)$ on $[0,1]^2$. 
To do this, if $(x,y) \in [0,1]^2$ and $x \geq y$, set
\begin{align}
\phi_R(x,y) &\ =\  R(x) \cdot \mathbb{I}[F^{-1}(x) >  2 F^{-1}(y)] \ +\  R(y) \cdot \mathbb{I}[F^{-1}(x) <  2 F^{-1}(y)]  \label{def:phiR}\\ 
& \qquad\qquad +\  \min\{R(x),R(y)\} \cdot \mathbb{I}[F^{-1}(x) =  2 F^{-1}(y)] \nonumber
\end{align}
If instead $x < y$, we symmetrically extend the function by setting $\phi_R(x,y) = \phi_R(y,x)$.

We want to write $\bar{r}_F$ as a double integral on $[0,1]^2$. By (\ref{eqn:priceSpace}), this integral should have the form:
\begin{equation}
\bar{r}_R = \int_{(x,y) \in [0,1]^2} \phi_R(x,y) \cdot d(x,y) \label{eqn:qSpace}
\end{equation}

We remark  that there are several immediate advantages of such a choice of variables. First, in price space, the regularity 
constraints are highly non-linear; as shown by Daskalakis and Zampetakis~\cite{DZ20}, given a gridding of the real line, 
each regularity constraint is a degree three polynomial inequality. However, in quantile space, we can impose the 
regularity of the distribution with the following set of \emph{linear} constraints:
$$ \forall x,y,\lambda \in [0,1], \qquad R(\lambda x + (1-\lambda) y ) \geq \lambda R(x) + (1-\lambda) R(y)$$ 
Furthermore, working in quantile space, we only need to approximate an integral on $[0,1]^2$ with our optimization formulation, 
thereby avoiding the need to explicitly eliminate the tails of the distribution. Instead, the tiling of $[0,1]^2$ and normalisation of the 
optimal revenue naturally places an upper bound on $F^{-1}$ whenever $R(1) = 0$.

Unfortunately, working in the quantile space also has disadvantages. While we can avoid non-linearities in the regularity constraints, 
we will still need some quadratic constraints, and quadratic (or cubic) objective functions. In particular, both the feasible region 
and the objective function will be non-convex. Furthermore, we will work with an MILP formulation, using $\sim \binom{n}{2}$ binary 
variables for a grid of $n$ intervals to handle degree 2 or degree 3 polynomial terms. Handling such a large set of binary variables for large $n$ 
will require care in how we set up our optimization programs.

\subsection{The Validity of the Quantile Space Approach}
There is also a subtle technical complication we must address. Every regular distribution has a concave revenue curve. 
To get our approach of using revenue curves to work, we then want to write:
\begin{equation}
\alpha = \inf_{F | F \text{ is regular}} \bar{r}_F / r_F  = \inf_{R | R : [0,1] \rightarrow \mathbb{R}_+ \text{ is concave}} \bar{r}_R / r_R \label{eqn:equiv}
\end{equation}
where $r_R = \max_{q \in [0,1]} R(q)$. However, it is not the case that the integrals (\ref{eqn:priceSpace}) and (\ref{eqn:qSpace}) 
are equal for each regular distribution, given our definition of regularity. For example:
\begin{example} Consider the point mass at $1$:
\begin{equation*}
F(x) = \begin{cases}
0 & x < 1 \\
1 & x \geq 1
\end{cases}
\end{equation*}
Then $F^{-1}(q) = 1$ for every $q \in (0,1]$, so the revenue curve is given $R(q) = 1-q$. Here, the revenue curve is 
concave -- in fact, affine. However, $\bar{r}_F = 1$ while $\bar{r}_R = 2/3$. 
\end{example}

On the other hand, for any $\epsilon > 0$, the uniform distribution on $[1,1+\epsilon]$ has revenue curve $R_\epsilon(q) = (1-q) \cdot (1+q\epsilon)$, a 
concave function. Furthermore, for any $\epsilon > 0$, $\bar{r}_{F_\epsilon} = \bar{r}_{R_\epsilon}$, and in the 
limit $\epsilon \rightarrow 0$, $\bar{r}_{R_\epsilon} \downarrow 2/3$ as intended. This equivalence is caused by each 
distribution $F_\epsilon$ being continuous (i.e. having a probability density function). Consequently, to prove the equivalence of 
our formulation (\ref{eqn:equiv}), we must prove a convergence result:

\begin{theorem}\label{thm:convergence}
Suppose that $R : [0,1] \rightarrow \mathbb{R}_+$ is concave. Then there exists a sequence $(F_n)_{n \in \mathbb{N}}$ of 
continuous regular distributions such that the sequence of associated revenue curves $(R_n)_{n \in \mathbb{N}}$ converges uniformly to $R$ on $[0,1]$.
\end{theorem}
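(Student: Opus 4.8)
The plan is to construct the approximating sequence explicitly by perturbing the given concave revenue curve $R$ so that the resulting price inverse becomes strictly increasing, which forces the underlying distribution to be continuous. The key insight is that the pathology in the point-mass example arises precisely because $F^{-1}$ is constant on an interval of quantiles (reflecting an atom in $F$), so that the event $F^{-1}(x) = 2 F^{-1}(y)$ — which is measure-zero for continuous distributions — carries positive probability mass and causes (\ref{eqn:priceSpace}) and (\ref{eqn:qSpace}) to diverge. Thus I would aim to produce curves $R_n$ whose price inverses $F_n^{-1}(q) = R_n(q)/(1-q)$ are strictly increasing and continuous, so that each $F_n$ is a genuine continuous regular distribution for which the two integrals agree, while $R_n \to R$ uniformly.

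First I would handle the regularity/concavity bookkeeping: given concave $R$, I would define a perturbation such as $R_n(q) = \left(1 - \tfrac{1}{n}\right) R(q) + \tfrac{1}{n}(1-q)(1 + \delta_n q)$ for a suitable small $\delta_n > 0$, or more simply take a convex combination of $R$ with the revenue curve of a uniform distribution on a short interval. Since a convex combination of concave functions is concave, $R_n$ remains concave; I must then verify that the added term makes the associated price inverse strictly increasing. The natural computation is to check that $F_n^{-1}(q) = R_n(q)/(1-q)$ is strictly increasing in $q$, which amounts to showing $R_n(q)/(1-q)$ has positive derivative wherever defined; the uniform-distribution summand contributes a strictly increasing term $1 + \delta_n q$ to the price inverse, and concavity of the main term ensures $R(q)/(1-q)$ is already nondecreasing, so the sum is strictly increasing.

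Next I would establish uniform convergence $R_n \to R$ on $[0,1]$, which should be routine: the perturbation terms are uniformly bounded (since $R$ is bounded on the compact interval $[0,1]$ and the auxiliary curves are bounded by $1$), so the $O(1/n)$ coefficients drive the sup-norm difference to zero. I would also need to confirm that each $R_n$ genuinely arises as the revenue curve of a continuous regular distribution $F_n$ — that is, that a concave curve with strictly increasing price inverse and $R_n(0) = 0$ (or at least $\lim_{q\to 1^-} R_n(q)/(1-q)$ well-behaved) corresponds to a bona fide CDF via $F_n(p) = q$ where $p = F_n^{-1}(q)$, and that this $F_n$ is continuous (no atoms) precisely because $F_n^{-1}$ is strictly increasing. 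Here I would invoke the standard correspondence between concave revenue curves and regular distributions.

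The main obstacle, I expect, will be the boundary behavior at $q = 1$ and the matching of the two integrals for the continuous distributions $F_n$. For the former, care is needed because $F^{-1}(1)$ is defined via a limit and the support of $F$ may be unbounded; I must ensure the perturbation does not spoil the value or finiteness of $\lim_{q \to 1^-} R_n(q)/(1-q)$. For the latter, I would argue that once $F_n$ is continuous the set $\{(x,y) : F_n^{-1}(x) = 2 F_n^{-1}(y)\}$ has Lebesgue measure zero in $[0,1]^2$, so the indicator term $\mathbb{I}[F^{-1}(x) = 2F^{-1}(y)]$ in (\ref{def:phiR}) contributes nothing to the integral and the change of variables from price space to quantile space is exact — this is the substantive step that justifies $\bar{r}_{F_n} = \bar{r}_{R_n}$ and thereby, together with uniform convergence and the continuity of the functional $R \mapsto \bar{r}_R$, closes the argument for the equivalence (\ref{eqn:equiv}).
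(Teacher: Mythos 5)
Your construction is correct, and it reaches the theorem by a more streamlined route than the paper's. The paper first replaces $R$ by its dyadic piecewise-linear interpolation $R^{|2^n}$ and then adds the perturbation $\tfrac{1}{n}q(1-q)$; uniform convergence then needs a uniform-continuity argument to control the interpolation error, and the piecewise-linear structure is what makes $F_n^{-1}$ differentiable off a finite set, so that $F_n$ visibly admits a density. You skip the interpolation entirely and take $R_n = \left(1-\tfrac{1}{n}\right)R + \tfrac{1}{n}(1-q)(1+\delta_n q)$, i.e.\ a convex combination with the revenue curve of the uniform distribution on $[1,1+\delta_n]$ --- exactly the example the paper uses to motivate the theorem. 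The mechanism forcing strict monotonicity of the price inverse is the same in both proofs: the perturbation contributes a strictly increasing term to $F_n^{-1}$ (your $\tfrac{1}{n}(1+\delta_n q)$, the paper's $\tfrac{q}{n}$), while Lemma~\ref{lem:increase} (which needs concavity \emph{and} non-negativity of $R$, a hypothesis you should cite explicitly) makes the main term non-decreasing. What your variant buys is that uniform convergence becomes trivial, $\|R_n - R\|_\infty \leq \tfrac{1}{n}\left(\sup_{q} R(q) + 2\right)$, with no appeal to uniform continuity of $R$.

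One point needs tightening. In this paper ``continuous'' means admitting a probability density function (see the remark preceding the theorem), not merely atomless, whereas you only argue ``no atoms because $F_n^{-1}$ is strictly increasing.'' Your construction does yield a density, but you should say why: every increment satisfies $F_n^{-1}(q') - F_n^{-1}(q) \geq \tfrac{\delta_n}{n}(q'-q)$, so the inverse $F_n$ is Lipschitz with constant $n/\delta_n$ on its support, hence absolutely continuous, hence has a density. (The paper reaches the same conclusion from piecewise differentiability plus bijectivity.) Separately, your final paragraph --- the measure-zero tie set and the identity $\bar{r}_{F_n} = \bar{r}_{R_n}$ --- is not part of this theorem; it is the content of the corollary establishing (\ref{eqn:equiv}), where the paper handles it via the measure-zero result for $\hat{E}(R)$ and dominated convergence. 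Including it does no harm, but it is not needed to close the proof of the convergence statement itself.
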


To prove Theorem~\ref{thm:convergence}, we need to prove several properties of concave, non-negative functions on $[0,1]$. The first result 
characterizes in a sense when $R$ {\em fails} to be the revenue curve of a continuous distribution:

\begin{lemma}\label{lem:increase}
Suppose that $R : [0,1] \rightarrow \mathbb{R}_+$ is concave (strictly concave), then $F^{-1}$ is non-decreasing (strictly increasing) 
on $[0,1)$. Moreover, $F^{-1}$ can only fail to strictly increase on a final segment: if for some $y < x$ in $[0,1)$ we 
have $F^{-1}(x) = F^{-1}(y)$, then for any $z \in [y,1), F^{-1}(z) = F^{-1}(y)$.
\end{lemma}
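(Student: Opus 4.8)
The plan is to work directly with the formula $F^{-1}(q) = R(q)/(1-q)$ on $[0,1)$ and to exploit the fact that, since $R$ is defined and non-negative on the closed interval, the endpoint value $R(1) \geq 0$ is available. First I would prove monotonicity. Fix $y < x$ in $[0,1)$; the inequality $F^{-1}(y) \leq F^{-1}(x)$ is equivalent to $R(x)(1-y) \geq R(y)(1-x)$. The key observation is that $x$ is the convex combination $x = \lambda y + (1-\lambda)\cdot 1$ with $\lambda = (1-x)/(1-y) \in (0,1)$. Applying concavity of $R$ at this combination gives $R(x) \geq \lambda R(y) + (1-\lambda) R(1) \geq \lambda R(y) = \tfrac{1-x}{1-y} R(y)$, where the second inequality uses only $R(1) \geq 0$. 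Rearranging yields exactly $R(x)(1-y) \geq R(y)(1-x)$, so $F^{-1}$ is non-decreasing.

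For the strict statement, I would note that when $R$ is strictly concave the first inequality above is strict (since $y$ and $1$ are distinct and $\lambda \in (0,1)$), whence $F^{-1}(x) > F^{-1}(y)$ and $F^{-1}$ is strictly increasing.

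The final-segment claim requires analysing the case of equality. Suppose $F^{-1}(x) = F^{-1}(y)$ for some $y < x$ in $[0,1)$; then both inequalities in the displayed chain must be equalities. Equality in $(1-\lambda)R(1) \geq 0$ forces $R(1) = 0$ (as $\lambda < 1$), while equality in the concavity step forces $(x, R(x))$ to lie on the chord $\ell$ joining $(y, R(y))$ to $(1,0)$; explicitly $\ell(q) = R(y)(1-q)/(1-y)$, so that $F^{-1}$ equals the constant $R(y)/(1-y)$ at every point where $R$ meets $\ell$. It then suffices to show that $R$ coincides with $\ell$ on all of $[y,1]$. I would set $h = R - \ell$, which is concave (a concave function minus an affine one), non-negative on $[y,1]$ (the chord of a concave function lies below its graph), and satisfies $h(y) = h(x) = h(1) = 0$ with $x \in (y,1)$. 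For any $z \in (y,1)$, writing the interior zero $x$ as a convex combination of $z$ and the appropriate endpoint and invoking concavity of $h$ gives $h(z) \leq 0$, which together with $h \geq 0$ forces $h(z) = 0$. Hence $R \equiv \ell$ on $[y,1]$ and $F^{-1}(z) = R(y)/(1-y) = F^{-1}(y)$ for every $z \in [y,1)$, as claimed.

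I expect the main obstacle to be the equality analysis in the last paragraph: the monotonicity itself is an immediate one-line consequence of concavity, but propagating equality all the way to $q = 1$ requires both extracting $R(1)=0$ and the ``concave function touching a chord at an interior point coincides with it'' sandwich argument. Care must also be taken with the endpoint value $R(1)$ throughout, since it is precisely the distinction $R(1) = 0$ versus $R(1) > 0$ that governs whether $F^{-1}$ can flatten out.
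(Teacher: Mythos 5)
Your proof is correct. The monotonicity step (including the strict variant) is exactly the paper's argument: write $x = \lambda y + (1-\lambda)\cdot 1$ with $\lambda = \tfrac{1-x}{1-y}$, apply concavity, and discard the term $(1-\lambda)R(1) \geq 0$. Where you genuinely diverge is the final-segment claim. The paper never touches the endpoint $q=1$ there: for $z \in (x,1)$ it writes $x = \lambda y + (1-\lambda)z$, assumes for contradiction that $F^{-1}(z) > F^{-1}(y)$, and chains concavity inequalities entirely inside $[y,z]$ to force $F^{-1}(x) > F^{-1}(y)$, a contradiction (the case $z \in (y,x)$ is dispatched by monotone sandwiching). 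You instead run an equality analysis on the monotonicity proof itself: $F^{-1}(x) = F^{-1}(y)$ collapses the chain $R(x) \geq \lambda R(y) + (1-\lambda)R(1) \geq \lambda R(y)$, which extracts $R(1) = 0$ and places $(x,R(x))$ on the chord $\ell$ from $(y,R(y))$ to $(1,0)$; then the standard fact that a non-negative concave function vanishing at two endpoints and an interior point vanishes identically gives $R \equiv \ell$ on $[y,1]$. Your route yields strictly more information: it shows $F^{-1}$ can only flatten when $R$ decays affinely to $R(1)=0$, which is exactly the point-mass-at-the-top phenomenon the paper discusses around this lemma (and which resurfaces in the corollary on $\hat E(R)$ and in the proof that LHS $\geq$ RHS of the equivalence). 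The paper's argument is more local and purely a contradiction, so it is shorter but gives no structural description of the equality case. Both are complete and rigorous.
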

\begin{proof}
Let $x,y \in [0,1)$ such that $y < x$. Then $x = \lambda \cdot y + (1-\lambda) \cdot 1$ for $\lambda = \frac{1-x}{1-y} \in (0,1)$. Now, 
as $R$ is concave, $R(x) \geq \lambda \cdot R(y) + (1-\lambda) \cdot R(1)$. However, $R$ is non-negative, so $R(1) \geq 0$. In 
particular, $R(x) \geq \lambda \cdot R(y)$. Because $\lambda = \frac{1-x}{1-y}$, we then have $\frac{R(x)}{1-x} \geq \frac{R(y)}{1-y}$.

To show the second statement, suppose that $F^{-1}(x) = F^{-1}(y)$. The case when $z \in (y,x)$ holds since $F^{-1}(\cdot)$ is non-decreasing.
So suppose that $z \in (x,1)$. Then $x = \lambda \cdot y + (1 - \lambda) \cdot z$. If $F^{-1}(z) > F^{-1}(y)$ then $R(z)(1-y) > R(y)(1-z)$. 
However, by definition of $\lambda$, we have $(1-x) = \lambda(1-y) + (1-\lambda)(1-z)$. This implies that
\begin{align*}
R(y) (1-x) & = \lambda R(y) (1-y) + (1-\lambda) (1-z) R(y) \\
& < \lambda R(y) (1-y) + (1-\lambda) (1-y) R(z) \\
& = (1-y) (\lambda R(y) + (1-\lambda) R(z)) \\
& \leq (1-y) R(x).
\end{align*}
Here the first inequality holds by the assumption $F^{-1}(z) > F^{-1}(y)$. The second inequality holds since $R$ is concave. 
This implies that $F^{-1}(x) > F^{-1}(y)$, a contradiction. Therefore, if $z \in (x,1)$ then $F^{-1}(z) = F^{-1}(y)$. 
\end{proof}

This is sufficient to show that for any concave $R : [0,1] \rightarrow \mathbb{R}_+$, the set on which $\phi_R$ can be discontinuous has zero measure:

\begin{corollary}
Suppose that $R : [0,1] \rightarrow \mathbb{R}_+$ is concave, and not identically zero. Then the set
$$ \hat{E}(R) = \{ (x,y) \in [0,1]^2 \, |\,  R(x) \cdot (1-y) = 2 R(y) \cdot (1-x) \} $$
has Lebesgue measure zero.
\end{corollary}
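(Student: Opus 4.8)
The plan is to reduce the claim to a slice-wise statement via Tonelli's theorem and then exploit the monotonicity structure of $F^{-1}$ furnished by Lemma~\ref{lem:increase}. First I would discard the boundary: the segments $\{x=1\}$ and $\{y=1\}$ have planar measure zero, so it suffices to bound $\hat{E}(R) \cap [0,1)^2$. On the open square the defining equation $R(x)(1-y) = 2R(y)(1-x)$ may be divided by $(1-x)(1-y) > 0$ and, recalling $F^{-1}(q) = R(q)/(1-q)$, rewritten as the clean condition $F^{-1}(x) = 2 F^{-1}(y)$. By Tonelli it then suffices to show that for almost every $y \in [0,1)$ the $x$-slice $S_y = \{x \in [0,1) : F^{-1}(x) = 2F^{-1}(y)\}$ is one-dimensionally negligible, since then $\lambda_2(\hat{E}(R) \cap [0,1)^2) = \int_0^1 \lambda_1(S_y)\, dy = 0$.

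Next I would use that, by Lemma~\ref{lem:increase}, $F^{-1}$ is non-decreasing on $[0,1)$. Consequently each slice $S_y$, being the preimage of the single value $2F^{-1}(y)$ under a monotone map, is an interval; it has positive length exactly when $F^{-1}$ is constant taking that value on some nondegenerate interval, i.e.\ when $2F^{-1}(y)$ is a \emph{plateau value} of $F^{-1}$. The crucial structural input is the second half of Lemma~\ref{lem:increase}: $F^{-1}$ can fail to be strictly increasing only on a final segment $[q^*,1)$, so it admits at most one plateau value, namely the constant $c^*$ it assumes there. In particular, if $F^{-1}$ is strictly increasing throughout then no slice has positive length and we are done at once; otherwise $S_y$ has positive length only for those $y$ with $2F^{-1}(y) = c^*$, that is $F^{-1}(y) = c^*/2$.

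The main obstacle is therefore to show that, in the remaining case, the exceptional set $T = \{ y \in [0,1) : F^{-1}(y) = c^*/2\}$ is itself $\lambda_1$-negligible; this is where the hypothesis that $R$ is not identically zero enters. I would first argue $c^* > 0$: if $c^* = 0$, monotonicity together with $F^{-1} \geq 0$ would force $F^{-1} \equiv 0$ on $[0,1)$, and concavity would then give $R \equiv 0$, a contradiction. With $c^* > 0$ we have $c^*/2 < c^*$, so $c^*/2$ lies strictly below the unique plateau value and is therefore \emph{not} a plateau value; since $F^{-1}$ is strictly increasing off $[q^*,1)$, the preimage $T$ contains at most one point, whence $\lambda_1(T) = 0$. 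Combining, for $y \notin T$ the slice $S_y$ is degenerate, and splitting $\int_0^1 \lambda_1(S_y)\,dy$ over $T$ and its complement --- the former vanishing because $\lambda_1(T)=0$ and $\lambda_1(S_y) \leq 1$, the latter because $\lambda_1(S_y)=0$ there --- yields $\lambda_2(\hat{E}(R)) = 0$. The one genuinely delicate point to get right is the bookkeeping of plateaus, namely translating ``constant only on a final segment'' into ``at most one plateau value, and $c^*/2$ is not one of them''; everything else is routine monotone-function and Tonelli manipulation.
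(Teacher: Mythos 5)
Your proof is correct, and it takes a genuinely different route from the paper's. You slice horizontally and invoke Tonelli: for fixed $y$ the slice $S_y=\{x: F^{-1}(x)=2F^{-1}(y)\}$ is an interval by monotonicity of $F^{-1}$ (Lemma~\ref{lem:increase}), and the second statement of that lemma gives your key structural fact that $F^{-1}$ admits at most one plateau value $c^*$ (any interval of constancy extends to a final segment, and two distinct constant values cannot coexist on overlapping final segments); positive-length slices therefore occur only for $y$ with $F^{-1}(y)=c^*/2$, and since $c^*>0$ --- this is where you use that $R$ is not identically zero, exactly the same juncture at which the paper uses it --- the value $c^*/2$ is not a plateau value, so the exceptional set of such $y$ is at most a single point. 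The paper slices in the other direction and proves something stronger about those slices: every vertical slice of $\hat{E}(R)\cap[0,1)^2$ contains at most one point, i.e.\ the set is the graph of a partial function $x\mapsto y$, which is increasing; rather than then invoking Fubini--Tonelli, the paper covers this monotone graph explicitly by rectangles whose total area halves under each dyadic refinement, giving (outer) measure zero directly. What your route buys: it is shorter, avoids the covering construction, and tolerates slices of positive length, which the paper's graph-of-a-function claim must explicitly rule out. What it costs: Tonelli requires $\hat{E}(R)$ to be product-measurable, a point you should state --- it is routine, since $\hat{E}(R)$ is the zero set of the Borel function $(x,y)\mapsto R(x)(1-y)-2R(y)(1-x)$ --- whereas the paper's covering argument bounds outer measure and needs no measurability discussion at all. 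Both proofs rest on the same input, Lemma~\ref{lem:increase}, so neither is more elementary; yours is the more standard measure-theoretic packaging, the paper's the more self-contained one.
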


\begin{proof}
We will show that $E(R) = \hat{E}(R) \cap [0,1)^2$ has Lebesgue measure zero, from which the result follows. 
Note that $E(R)$ is the set of $(x,y)$ such that $F^{-1}(x) = 2 \cdot F^{-1}(y)$. We first show that $E(R)$ is a \emph{function} 
mapping each $x$ to such $y$.\footnote{Set-theoretically, a function is a set containing all input-output pairs.} Let $x,y,y'$ be 
such that $y < y'$, and $F^{-1}(x) = 2 \cdot F^{-1}(y) =  2 \cdot F^{-1}(y')$. Then $F^{-1}(y) =  F^{-1}(y')$. Thus, by 
Lemma~\ref{lem:increase}, $F^{-1}$ is constant on $[y,1]$. 

Now observe that, since $F^{-1}(x) = 2 \cdot F^{-1}(y)$, we have $F^{-1}(x) \geq F^{-1}(y)$. If $x < y$ then $F^{-1}(x) \leq F^{-1}(y)$, by Lemma~\ref{lem:increase}. Therefore, $F^{-1}(x) = F^{-1}(y)$. If instead $x \geq y$, then as $F^{-1}$ is constant on $[y,1]$, we again have $F^{-1}(x) = F^{-1}(y)$. 
However, as $F^{-1}(x) = 2 \cdot F^{-1}(y)$, this implies that $F^{-1}(y) = 0$. Because $F^{-1}$ is constant on $[y,1]$ we have $F^{-1}(1) = 0$. 
So, as $F^{-1}$ is weakly increasing, $F^{-1}$ is identically zero on $[0,1]$. Therefore $R$ is identically zero on $[0,1]$, contradiction.

Consequently, for each $x \in [0,1]$, there exists at most one $y \in [0,1]$ such that $(x,y) \in E(R)$, that is, $E(R)$ is indeed a function. 
Let $d$ denote the domain of $E(R)$. By Lemma~\ref{lem:increase}, $E(R)$ is increasing on its domain. Note that $R(x) / (1-x)$ is 
a continuous function, so $d$ is in fact an interval. Let $a = \inf d$ and $b = \sup d$. Let $\underline{y} = \lim_{x \downarrow a} E(R)(x)$
and let $\bar{y} = \lim_{x \uparrow b} E(R)(x)$. For each $n$, divide $a = x_0$ and $x_i = a + i \cdot 2^{-n} (b-a)$, for $i \leq 2^n$. 
Also let $y_0 = \underline{y}$, $y_{2^n} = \bar{y}$ and $y_i = E(R)(x_i)$, for $1 \leq i < 2^n$. 
Then $E(R) \subseteq \cup_{i = 1}^{2^n} [x_{i-1},x_i] \times [y_{i-1},y_i] = E_n(R)$ for each $n \in \mathbb{N}$. 
Furthermore, $E_n(R)$ has measure $0$ in the limit $n \rightarrow \infty$, as $\mu(E_{n+1}(R)) = 1/2 \cdot \mu(E_n(R))$ for 
any $n \in \mathbb{N}$. 
\end{proof}

As an immediate result, we have the Riemann integrability of $\phi_R$.

\begin{corollary}
Suppose that $R : [0,1]^2 \rightarrow \mathbb{R}_+$ is concave. Then $\phi_R$ is Riemann integrable.
\end{corollary}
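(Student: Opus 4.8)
The plan is to invoke the Lebesgue criterion for Riemann integrability (the Riemann--Lebesgue theorem): a bounded function on the rectangle $[0,1]^2$ is Riemann integrable if and only if its set of points of discontinuity has Lebesgue measure zero. So it suffices to verify two things, that $\phi_R$ is bounded and that its discontinuity set is null. If $R$ is identically zero then $\phi_R \equiv 0$ and the claim is trivial, so I may assume $R$ is not identically zero, which is exactly what is needed to apply the preceding corollary. Boundedness is immediate: a nonnegative concave function on the compact interval $[0,1]$ is bounded above, say by $r_R = \max_q R(q)$, and every value of $\phi_R$ is one of $R(x)$, $R(y)$, or $\min\{R(x),R(y)\}$, whence $0 \le \phi_R \le r_R$ throughout.

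The substance is locating the discontinuities. First I would discard the boundary: the four lines $\{0,1\}\times[0,1]$ and $[0,1]\times\{0,1\}$ together form a null set, so I make no continuity claim there; in particular the endpoints $q=1$, where $F^{-1}$ may be infinite or $R$ discontinuous, cause no trouble. On the open square $(0,1)^2$ the function $R$ is continuous (a finite concave function is continuous on the interior of its domain), so the map $g(x,y) = R(x)(1-y) - 2R(y)(1-x)$ is continuous there, and $g(x,y)=0$ precisely when $(x,y)\in \hat{E}(R)$, i.e.\ when $F^{-1}(x)=2F^{-1}(y)$. I would then show that $\phi_R$ is continuous at every point of $(0,1)^2$ lying outside the set $\hat{E}(R)\cup \hat{E}'(R)\cup \Delta$, where $\Delta=\{x=y\}$ is the diagonal and $\hat{E}'(R)=\{(x,y):(y,x)\in\hat{E}(R)\}$ is the reflection of $\hat{E}(R)$ across $\Delta$. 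Indeed, fix $(x_0,y_0)\in(0,1)^2$ off this set with, say, $x_0>y_0$ (the case $x_0<y_0$ being symmetric through the reflected condition together with the identity $\phi_R(x,y)=\phi_R(y,x)$). Since $(x_0,y_0)\notin\hat{E}(R)$ we have $g(x_0,y_0)\neq 0$, so by continuity of $g$ and of $x-y$ there is a neighbourhood of $(x_0,y_0)$ contained in $\{x>y\}$ on which $\operatorname{sign} g$ is constant; on that neighbourhood $\phi_R$ equals either $R(x)$ or $R(y)$, each continuous, and hence $\phi_R$ is continuous at $(x_0,y_0)$.

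Assembling the pieces, the discontinuity set of $\phi_R$ is contained in the union of the boundary lines, the diagonal $\Delta$, the set $\hat{E}(R)$, and its reflection $\hat{E}'(R)$. The boundary lines and $\Delta$ are null; $\hat{E}(R)$ is null by the preceding corollary; and $\hat{E}'(R)$ is null because it is the image of $\hat{E}(R)$ under the measure-preserving reflection $(x,y)\mapsto(y,x)$. A finite union of null sets is null, so $\phi_R$ has a null set of discontinuities, and the Riemann--Lebesgue theorem yields integrability.

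The only step requiring genuine care, and the one I expect to be the main obstacle, is the continuity analysis near the diagonal and the handling of the symmetric region $\{x<y\}$, where the two defining branches of $\phi_R$ must be reconciled. Folding the entire (null) diagonal into the exceptional set neatly sidesteps any delicate matching of the branches there, reducing the whole argument to the single nontrivial input already supplied, namely that $\hat{E}(R)$ has measure zero.
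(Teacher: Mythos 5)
Your proof is correct and follows essentially the same route as the paper's: invoke Lebesgue's criterion for Riemann integrability, using the preceding corollary that $\hat{E}(R)$ has measure zero to conclude the discontinuity set is null. You are in fact more careful than the paper's (very terse) proof, which glosses over the identically-zero case, the boundary of the square, the diagonal, and the reflected set $\hat{E}'(R)$ governing the region $\{x<y\}$ — all details you handle explicitly and correctly.
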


\begin{proof}
Since $R$ is continuous, the function is continuous on $[0,1]^2 \setminus E(R)$. In particular, the function is 
discontinuous only on the subset of a set of Lebesgue measure zero. Therefore, by Lebesgue's criterion for 
Riemann integribility, the function is Riemann integrable. 
\end{proof}

So, the integral expression we have makes sense for any concave $R : [0,1]^2 \rightarrow \mathbb{R}_+$. Next, we would like to obtain 
a sequence of probability density functions $f_n$, such that their associated revenue curves $R_n$ tend to $R$. We will do so 
by first defining the following:

\begin{definition}
Let $R : [0,1] \rightarrow \mathbb{R}_+$ be a concave function. Then the $n$-interval piecewise approximation of $R$ is the 
function $R^{|n} : [0,1] \rightarrow \mathbb{R}_+$ such that:
\begin{enumerate}
\item If $x = k / n$ for some $k \in \mathbb{N} \cup \{0\}$ and $k \leq n$, then $R^{|n} = R(x)$.
\item If $x = \lambda k / n + (1-\lambda) (k+1) / n$ for some $k \in \mathbb{N} \cup \{0\}$ and $k \leq n, \lambda \in (0,1)$, then 
$$R^{|n}(x) = \lambda R(k/n) + (1-\lambda) R\left((k+1)/n\right).$$
\end{enumerate}
\end{definition}

We are now ready to prove our convergence theorem:

\begin{proof}
(of Theorem \ref{thm:convergence}) If $R$ is identically zero then the result immediately holds. So suppose not. Note 
that each $R^{|n}$ is again concave, and let $R_n(q) = R^{|2^n}(q) + 1/n \cdot q (1-q)$. We first show that $R_n$ 
converges uniformly to $R$. Let $\epsilon > 0$. If $x = k / 2^m$ for some $k,m$ then, for sufficiently large $n$, we have $R^{|2^n}(x) = R(x)$. If 
not, as $R$ is concave on a compact set, $R$ is uniformly continuous. Hence $\exists \delta > 0, \forall x,y, |x-y| < \delta \Rightarrow |R(x) - R(y)|<\epsilon/2$. Then if $n > -\ln(\delta)$, there exists $k$ such that $k/2^n, (k+1)/2^n$ are in the $\delta$-neighbourhood of $x$. 
Therefore, $\exists \lambda \in (0,1)$ such that $x = \lambda \cdot k/2^n + (1-\lambda) \cdot (k+1)/2^n$. So, by the definition of $R^{|2^n}$,
\begin{align*}
|R^{|2^n}(x)-R(x)| & = | \lambda R(2^{-n}k) + (1-\lambda) R(2^{-n}(k+1)) - R(x) | \\
& \leq \lambda \cdot | R(2^{-n}k) - R(x) | + (1-\lambda) \cdot | R(2^{-n}(k+1)) - R(x) | \\
& < \epsilon / 2.
\end{align*}
Also for sufficiently large $n$, we have $2^{-n} \cdot x(1-x) \leq 2^{-n} \cdot \frac{1}{4} \leq \epsilon / 2$. Together this implies, for 
large $n$, that $|R_n(x) - R(x)| < \epsilon$. As $x$ was arbitrary, we indeed have $R_n$ converging uniformly $R$.

That $R_n$ corresponds to a distribution is immediate as $F_n^{-1}$ is differentiable everywhere on $(0,1)$ except a finite set. 
Furthermore, $F_n^{-1}$ is a bijection by Lemma~\ref{lem:increase}. Thus, $F_n^{-1}$ has a differentiable inverse, $F_n$. In fact, 
for each $s$ on the image $F_n^{-1}\left((0,1)\right)$, if $F(s) = q$ such that $F_n^{-1}(q) = s$, then $f(s) = \left( dF^{-1}_n(q) /dq \right)^{-1}$.
\end{proof}

As a corollary, we conclude that the equivalence (\ref{eqn:equiv}) does hold, showing that our decision to work in quantile space does 
not come at any cost regarding tightness of upper and lower bounds on $\alpha$:

\begin{corollary}
\begin{equation*}
\inf_{F | F \text{ is regular}} \bar{r}_F / r_F  \ =\ \inf_{R | R : [0,1] \rightarrow \mathbb{R}_+ \text{ is concave}} \bar{r}_R / r_R
\end{equation*}
\end{corollary}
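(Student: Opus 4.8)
The plan is to name the two quantities $\alpha = \inf_{F}\bar r_F/r_F$ and $\beta = \inf_{R}\bar r_R/r_R$ (discarding the degenerate $R\equiv 0$, for which $r_R=0$), and to prove the inequalities $\alpha\le\beta$ and $\alpha\ge\beta$ separately. Throughout I will use that for a \emph{continuous} regular distribution $F$ with revenue curve $R$, the substitution $s=F^{-1}(x)$, $t=F^{-1}(y)$ carries (\ref{eqn:priceSpace}) into (\ref{eqn:qSpace}), so that $\bar r_F=\bar r_R$ and $r_F=r_R$; the point-mass example shows this identity can fail once $F$ has atoms, and localizing that failure is the one place real care is required.

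For $\alpha\le\beta$, I would fix a concave $R$ and apply Theorem~\ref{thm:convergence} to get continuous regular distributions $F_n$ whose revenue curves $R_n$ converge uniformly to $R$. Since each $F_n$ is continuous, $\bar r_{F_n}/r_{F_n}=\bar r_{R_n}/r_{R_n}\ge\alpha$, so it suffices to show the two functionals are continuous under uniform convergence, i.e. $r_{R_n}\to r_R$ and $\bar r_{R_n}\to\bar r_R$; then $\bar r_R/r_R=\lim_n \bar r_{R_n}/r_{R_n}\ge\alpha$, and taking the infimum over $R$ gives $\beta\ge\alpha$. The first limit is immediate, as $r_{R_n}=\max R_n\to\max R=r_R$. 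The second is the crux: uniform convergence gives $F_n^{-1}(q)=R_n(q)/(1-q)\to R(q)/(1-q)=F^{-1}(q)$ pointwise on $[0,1)$, so at every $(x,y)$ outside the set $E(R)=\{(x,y): F^{-1}(x)=2F^{-1}(y)\}$ the strict inequalities defining $\phi$ are eventually preserved and $\phi_{R_n}(x,y)\to\phi_R(x,y)$; since $E(R)$ has Lebesgue measure zero by the earlier corollary and the integrands are uniformly bounded (the $R_n$ are uniformly bounded), the bounded convergence theorem yields $\bar r_{R_n}=\int\phi_{R_n}\to\int\phi_R=\bar r_R$. I expect this limit to be the main obstacle, precisely because $\phi_R$ is genuinely discontinuous, so the argument cannot be pointwise and must route through the measure-zero corollary together with bounded convergence.

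For $\alpha\ge\beta$, I would fix a regular $F$ with concave revenue curve $R$ and show $\bar r_F/r_F\ge\beta$. When $F$ is continuous this is immediate from $\bar r_F/r_F=\bar r_R/r_R\ge\beta$. For general $F$, Lemma~\ref{lem:increase} confines all non-injectivity of $F^{-1}$, hence every atom of $F$, to a single final segment $[y^\ast,1)$ on which $F^{-1}\equiv v^\ast$; that is, $F$ can only have an atom at its top value, and (by concavity of $R$) its support has no interior gaps. The plan is to compare (\ref{eqn:priceSpace}) and (\ref{eqn:qSpace}) region by region over the partition of $[0,1]^2$ induced by $\{[0,y^\ast),[y^\ast,1)\}$: off the atom the substitution still identifies the two integrands, while on the pairs landing in the atom the price-space mechanism sells with the full atom probability, whereas $\phi_R$ treats the single price $v^\ast$ as spread over the quantiles $q\in[y^\ast,1)$ and so assigns the smaller factor $1-q$. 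This gives $\bar r_F\ge\bar r_R$; combined with $r_F\le r_R$ (each posted price's revenue $F^{-1}(q)(1-q)=R(q)$ is at most $\max_q R(q)=r_R$), it follows that $\bar r_F/r_F\ge\bar r_R/r_R\ge\beta$. Taking the infimum over $F$ gives $\alpha\ge\beta$, and together with the first direction this proves $\alpha=\beta$.
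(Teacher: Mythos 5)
Your proposal is correct and matches the paper's own proof essentially step for step: for $\alpha\le\beta$ you use Theorem~\ref{thm:convergence} together with almost-everywhere pointwise convergence of $\phi_{R_n}$ (off the measure-zero equality set $\hat{E}(R)$) and dominated/bounded convergence, and for $\alpha\ge\beta$ you use Lemma~\ref{lem:increase} to confine any atom to the top of the support, identify the two integrals off the atom by the quantile substitution, and dominate $\phi_R$ by $\psi_F$ on the regions touching the atom via the comparison $1-q\le f(\rho)$ --- exactly the paper's case analysis. The only differences are presentational: you spell out two details the paper leaves implicit (why $\phi_{R_n}\to\phi_R$ almost everywhere, and why $r_F\le r_R$ so that the ratio inequality follows), which is a refinement of the same argument rather than a different route.
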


\begin{proof}
First let's show LHS $\leq$ RHS. Let $R$ be concave and non-negative on $[0,1]$ and let $R_n$ be as in the proof of Theorem \ref{thm:convergence}. 
Since $R_n$ converges pointwise to $R$, $\phi_{R_n}$ also converges pointwise to $\phi_R$ almost everywhere. Also, 
both $\phi_{R_n}$ and $\phi_R$ are bounded above by the constant function $\sup_{q \in [0,1]} R(q) + 1/2$. Therefore, by the Lebesgue 
dominated convergence theorem,
\begin{equation*}
\int_{(x,y) \in [0,1]^2} \phi_R(x,y) = \lim_{n\rightarrow \infty} \int_{(x,y) \in [0,1]^2} \phi_{R_n}(x,y) = \lim_{n\rightarrow \infty} \bar{r}_{R_n}.
\end{equation*}
However, as $R_n$ is the revenue curve of a distribution which admits a probability density function, we have $\bar{r}_{R_n} = \bar{r}_{F_n}$. 
Further, by definition of $R_n$, for any $\epsilon$ there exists sufficiently large $n$ such that $r_R - \epsilon \leq r_{R_n} \leq r_R + \epsilon$. 
Together, this implies that for any concave and non-negative curve $R$ on $[0,1]$, there exists a sequence of distributions $F_n$ whose 
revenue efficiency converges to $\bar{r}_R$.

Next let's show LHS $\geq$ RHS. Take any regular distribution $F$. By Lemma~\ref{lem:increase}, $F$ can have a point mass only at the supremum 
of its support. Let $\rho$ be this supremum, and let $f(\rho)$ denote the probability the buyer's valuation is $\rho$. Note it is possible 
that $\rho = \infty$, but by regularity of $F$ that implies $f(\rho) = 0$. In this case, note that the following change of variables still works:
\begin{equation*}
\int_{(s,t) \in [0,\rho)} \psi_F(s,t) \cdot dF(s) \times dF(t) \ =\ \int_{(x,y) \in [0,1-f(\rho)] ^2} \phi_R(x,y) \cdot d(x,y).
\end{equation*}
If $f(\rho) = 0$, then we are done. Else, we show that the remaining contributions of $\psi_F$ upper bound the contributions from $\phi_R$. 
If both samples equal $\rho$, then:
\begin{equation*}
\psi_F(\rho,\rho) \ =\ \phi_R(1-f(\rho),1-f(\rho)) \ \geq\  \phi_R(x,y) \qquad \forall x,y \in [1-f(\rho),1]
\end{equation*}
On the other hand, if one sample is $\rho$ and the other sample $t < \rho$, there are two possibilities. If $2t > \rho$, 
then $\psi_F(\rho,t) = \phi(x,F(t))$ for any $x \in [1-f(\rho)]$. If instead $2t < \rho$, then $\psi(\rho,t) = R(1-f(\rho)) \geq \phi(x,F(t))$ 
for any $x \in [1-f(\rho)]$. We conclude that $\bar{r}_F \geq \bar{r}_R$.
\end{proof}

\section{Approximation Programs}\label{sec:prog}

The Riemann integrability of $\phi_R$ on $[0,1]^2$ suggests a possible optimization formulation for our problem. Given a gauge, 
we can try to find a concave and non-negative function $R$ on $[0,1]$, suitably constrained, such that an approximation of $\bar{r}_R$ is 
minimized. Here, we derive the forms of the optimization programs we evaluate, and prove their approximation properties. 
In Section \ref{sec:mainVar} we present our main primal variables by formulating a class of quadratically-constrained programs 
whose solutions minimize approximations of (\ref{eqn:qSpace}) given some gauge on $[0,1]$. In Section \ref{sec:progUB} we provide a quadratic 
objective function for (\ref{eqn:qSpace}). Linearisation of the objective and the constraints then provides a family of MILPs suitable for searching for 
minimal distributions for the ERM mechanism with two samples, while proofs of well-behaviour of feasible solutions show that the value of these MILPs 
converge to $\alpha$ under gauge refinements. Motivated by this in Section \ref{sec:progLB} 
we formulate a cubic objective for (\ref{eqn:qSpace}) which allows us to obtain a family of MILPs whose values provide lower bounds on $\alpha$. 
In Section \ref{sec:perf}, we detail several considerations we employ to ensure that our MILPs are practically solvable and provide good bounds.

\subsection{Primal Variables}\label{sec:mainVar}

To compute a Riemann sum of $\phi_R$ on $[0,1]^2$, we would first need to define a gauge on $[0,1]^2$. Here, we opt for the natural approach, 
defining a gauge on $[0,1]^2$ by considering product intervals arising from a gauge on $[0,1]$. In this line, suppose we divide the interval $[0,1]$ 
into subintervals of the form $I(i) = [q_i,q_{i+1}]$ for $1 \leq i \leq n$, where $q_1 = 0$, $q_{n+1} = 1$, 
and $q_{i+1} > q_i$ for any $1 \leq i \leq n$. Also denote by $I(i,j)$ the product interval $[q_i,q_{i+1}]\times[q_j,q_{j+1}]$. Then we may rewrite
integral (\ref{eqn:qSpace}) as:
\begin{align}
\bar{r}_R & = \int_{(x,y) \in [0,1]^2} \phi_R(x,y) \cdot d(x,y) \label{eqn:integral} \\
& = \sum_{1 \leq i,j \leq n} \int_{(x,y) \in I(i,j)} \phi_R(x,y) \cdot d(x,y) \nonumber \\
& = \sum_{1 \leq i \leq n} \int_{(x,y) \in I(i,i)} \phi_R(x,y)\cdot d(x,y) + 2 \cdot \sum_{1 \leq j < i \leq n} \int_{(x,y) \in I(i,j)} \phi_R(x,y) \cdot d(x,y) \nonumber
\end{align}

We want primal solutions to our problems to describe approximately minimal value distributions for the buyer. One way to do so is to include 
variables that correspond to the values the revenue curve attains. Specifically, for $(q_i)_{1 \leq i \leq n+1}$, we will include variables $R(q_i)$. 
For notational convenience later on, let $\vec{R}$ denote the vector containing all $R(q_i)$.

Then each $R(q_i)$ corresponds to a value attained by a non-negative, concave function. This implies that the following constraints must hold:
\begin{align}
R(q_i) \cdot (q_{i+1} - q_{i-1}) & \geq R(q_{i+1}) \cdot (q_i - q_{i-1}) + R(q_{i-1}) \cdot (q_{i+1}-q_i) && \forall 1 < i < n+1 \label{con:Rconcave}\\
R(q_i) & \geq 0 && \forall 1 \leq i \leq n+1
\end{align}

Furthermore, we want $R$ to be normalized such that $\max_{q \in [0,1]} R(q) = 1$. Unfortunately, this is non-trivial to implement linearly. So, 
instead, we constrain the set of revenue curves so that there exists $1 \leq OPT \leq n$ and $q^* \in [q_{OPT},q_{OPT+1}]$ 
such that $R(q^*) = 1$. By the concavity and non-negativity of $R$, this implies that:
\begin{align}
R(q_{OPT}) & \geq q_{OPT} / q_{OPT+1} \label{con:Ropt1}\\
R(q_{OPT+1}) & \geq (1-q_{OPT+1}) / (1-q_{OPT}) \label{con:Ropt2}
\end{align}
Furthermore, by concavity, $R(\cdot)$ should be weakly increasing before $q_{OPT}$ and weakly 
decreasing beyond $q_{OPT+1}$:
\begin{align}
R(q_{i+1}) - R(q_i) &\leq 0 \qquad \forall 1 \leq i < q_{OPT} \label{con:Ropt3} \\
- R(q_{i+1}) + R(q_i) &\leq 0 \qquad \forall q_{OPT+1} \leq i < n \nonumber
\end{align}

We also model the indicator functions in (\ref{def:phiR}) as binary variables:

\begin{lemma}\label{lem:wCons}
For any $(x,y) \in [0,1)$ such that $x > y$,
\begin{align}
\phi_R(x,y) = & \min_{w(x,y) \in \{0,1\}} &  R(x) \cdot w(x,y) + R(y) \cdot (1-w(x,y))  & \nonumber \\
& \text{subject to} & w(x,y) \cdot [ R(x) \cdot (1-y) - 2 R(y) \cdot (1-x) ] & \geq 0 \label{con:wx}\\
&  & (1-w(x,y)) \cdot [ R(x) \cdot (1-y) - 2 R(y) \cdot (1-x) ] & \leq 0 \label{con:wy}
\end{align}
\end{lemma}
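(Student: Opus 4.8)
The plan is to reduce the identity to a three-way case analysis on the sign of the quantity
\[
D \ :=\ R(x) \cdot (1-y) - 2 R(y) \cdot (1-x),
\]
which is exactly the bracketed expression appearing in both constraints (\ref{con:wx}) and (\ref{con:wy}). First I would relate the sign of $D$ to the three indicators in the definition (\ref{def:phiR}) of $\phi_R$. Since $x, y \in [0,1)$, both $1-x$ and $1-y$ are strictly positive, so dividing $D$ by the positive number $(1-x)(1-y)$ shows that $D$ shares its sign with $F^{-1}(x) - 2 F^{-1}(y) = \frac{R(x)}{1-x} - \frac{2 R(y)}{1-y}$. Thus the events $D > 0$, $D < 0$, and $D = 0$ correspond precisely to $F^{-1}(x) > 2 F^{-1}(y)$, $F^{-1}(x) < 2 F^{-1}(y)$, and $F^{-1}(x) = 2 F^{-1}(y)$, which select the three terms of $\phi_R$.

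Next I would determine the feasible set of the binary variable $w = w(x,y)$ in each case. When $D > 0$, constraint (\ref{con:wx}) reads $w \cdot D \geq 0$ and so holds for both $w = 0$ and $w = 1$, whereas (\ref{con:wy}) reads $(1-w) \cdot D \leq 0$ and forces $1 - w \leq 0$, i.e.\ $w = 1$; the unique feasible point then yields objective value $R(x)$. Symmetrically, when $D < 0$, constraint (\ref{con:wx}) forces $w = 0$ while (\ref{con:wy}) becomes slack, giving objective value $R(y)$. When $D = 0$, both constraints are satisfied vacuously for either value of $w$, so $w = 0$ and $w = 1$ are both feasible and the outer minimization returns $\min\{R(x), R(y)\}$.

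Finally I would match the three computed optima against the three terms of (\ref{def:phiR}): $R(x)$ in the case $D > 0$, $R(y)$ in the case $D < 0$, and $\min\{R(x), R(y)\}$ in the case $D = 0$. Since exactly one case occurs for any given $(x,y)$, the minimization problem evaluates to $\phi_R(x,y)$ in every case, which establishes the lemma.

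The argument is essentially mechanical, and I do not anticipate a genuine obstacle. The only point requiring a moment of care is the degenerate case $D = 0$: here one must check that writing the constraints as the products $w \cdot D$ and $(1-w) \cdot D$, rather than as a single linear threshold on $w$, is exactly what keeps both values of $w$ feasible, so that the outer minimization reproduces the $\min\{R(x), R(y)\}$ term demanded by (\ref{def:phiR}).
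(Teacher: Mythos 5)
Your proposal is correct and follows essentially the same argument as the paper: a three-way case analysis on the sign of $D = R(x)(1-y) - 2R(y)(1-x)$, checking in each case which values of $w(x,y)$ are feasible and that the resulting minimum matches the corresponding term of (\ref{def:phiR}). The only difference is that you make explicit the (harmless, correct) observation that dividing $D$ by $(1-x)(1-y)>0$ identifies its sign with that of $F^{-1}(x)-2F^{-1}(y)$, a step the paper leaves implicit.
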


\begin{proof}
If $R(x) (1-y) > 2 R(y) (1-x)$ then the only feasible point is $w(x,y) = 1$. This correctly sets $\phi_R(x,y) = R(x)$. 
Likewise, if $R(x) (1-y) < 2 R(y) (1-x)$ then the 
only feasible point is $w(x,y) = 0$, which correctly sets $\phi_R(x,y) = R(y)$. If $R(x) (1-y) = 2 R(y) (1-x)$, then both $w(x,y) = 1$ and $w(x,y) = 0$ are 
feasible. This correctly sets $\phi_R(x,y) = \min \{ R(x),R(y) \}$.
\end{proof}

To compute a Riemann sum, we evaluate $w$ on a set $T$ of points in $[0,1]^2$ such that:
$$ \forall 1 \leq j < i \leq n,\quad \exists (\bar{q}_i,\bar{q}_j) \in T,\  (\bar{q}_i,\bar{q}_j) \in [q_i,q_{i+1}] \times [q_j,q_{j+1}].$$
This condition implies that each non-diagonal area element $I(i,j)$ for $1 \leq j < i \leq n$ contains a point where we evaluate $w$. 

We include variables for the value $R$ attains on endpoints of intervals, but $w$ may be evaluated (in principle) anywhere on $I(i,j)$. 
Then for $(x,y) \in T$, to be able to impose constraints of the form (\ref{con:wx}) and (\ref{con:wy}) on $w(x,y)$, we find $R(x)$ and $R(y)$ by 
linear interpolation on $\vec{R}$. In particular, if $x \in [q_i,q_{i+1}]$, then:
$$ R(x) \cdot (q_{i+1}-q_i) = R(q_i) \cdot (q_{i+1}-x) + R(q_{i+1}) \cdot (x - q_i),$$
and likewise for $R(y)$. So setting $\vec{w}$ to be the vector containing all $w(x_\ell,y_\ell)$, for each individual summand in (\ref{eqn:integral}) we 
may approximate
\begin{equation*}
\int_{(x,y) \in I(i,j)} \phi_R(x,y) \cdot d(x,y) \simeq A(i,j) \cdot f_{ij}(\vec{R},\vec{w})
\end{equation*}
where $A(i,j) = (q_{i+1}-q_i)(q_{j+1}-q_j)$ is the area of $I(i,j)$, for $1 \leq j \leq i \leq n$, and $f_{ij}$ is some function determined by our 
approximation scheme, homogeneous of degree one in $\vec{R}$.

This provides the form of our most general optimization formulation: we consider a set of gauges indexed by a set $J$, $(\vec{q}^k)_{k \in J}$, such 
that $\cup_{k \in J} [q^k_{OPT^k}, q^k_{OPT^k+1}] = [0,1]$, and find $R$ that minimizes our approximation of $\bar{r}_R$ by computing:
\begin{align}
\min_{k \in J} \min_{\vec{R},\vec{w}} & \quad \sum_{1 \leq i \leq n} A_k(i,i) \cdot f_{ii}(\vec{R},\vec{w}) 
	+ 2 \cdot \sum_{1 \leq j < i \leq n} A_k(i,j) \cdot f_{ij}(\vec{R},\vec{w})  \label{opt:general} \\
\text{subject to} &  \quad (\ref{con:Rconcave}),(\ref{con:Ropt1}),(\ref{con:Ropt2}),(\ref{con:Ropt3}),(\ref{con:wx}),(\ref{con:wy}) \nonumber \\
& \quad \vec{R} \in [0,1]^{n+1} \nonumber \\ 
& \quad \vec{w} \in \{0,1\}^{T} \nonumber
\end{align}

\subsection{Upper Bound: A Quadratic Formulation}\label{sec:progUB}

To derive an upper bound, we will need to find an approximately-minimal revenue curve. We consider a straightforward 
implementation of (\ref{opt:general}) to do this. For $n \in \mathbb{N}$, we take the uniform gauge given by $q_i = (i-1) /n$ 
for $1 \leq i \leq n+1$, and consider each case when the peak of the revenue curve is in $[q_k,q_{k+1}]$ for $1 \leq k \leq n$. 
To evaluate the Riemann sum, mark the midpoint of each interval:
$$ \bar{q}_i = \frac{q_i + q_{i+1}}{2} $$
Then to approximate our Riemann integral, for each $I(i,j)$ we will evaluate the function $\phi_R$ at $(\bar{q}_i,\bar{q}_j)$. So we set:
\begin{align*}
f_{ii} & = \frac{R(q_i)+R(q_{i+1})}{2} & & \forall 1 \leq i \leq n \\
f_{ij} & = \frac{R(q_i)+R(q_{i+1})}{2} \cdot w(\bar{q}_i,\bar{q}_j) + \frac{R(q_j)+R(q_{j+1})}{2} \cdot (1-w(\bar{q}_i,\bar{q}_j)) & & \forall 1 \leq j < i \leq n
\end{align*}

If we evaluate the resulting optimization problem, the constraints (\ref{con:Ropt1}) and (\ref{con:Ropt2}) tend to ``chip off'' the 
peak of the revenue curve in the primal solutions. This is unlikely to be a feature of an actual minimal revenue curve, so we will 
convert the constraints (\ref{con:Ropt1}) and (\ref{con:Ropt2}) into a single equality constraint, at the cost of increasing the size 
of the index set $J$ by one. First observe that 
$$\max \{ R(q_{OPT}), R(q_{OPT+1}) \} \ \geq\  \max \{q_{OPT} / q_{OPT+1}, (1-q_{OPT+1}) / (1-q_{OPT}) \}$$
for any feasible solution $(\vec{R},\vec{w})$. Let $\vec{R}^* = \max \{ R(q_{OPT}), R(q_{OPT+1})\}^{-1} \vec{R}$. Then for any $1 \leq j \leq i \leq n$, by the 
homogeneity of $f_{ij}$ in $\vec{R}$:
\begin{align}
A(i,j) \cdot f_{ij}(\vec{R},\vec{w}) 
	& = \max \{ R(q_{OPT}), R(q_{OPT+1}) \} \cdot A(i,j) \cdot f_{ij}(\vec{R}^*,\vec{w}) \label{eqn:loss}\\
& \geq \max \Bigg\{\frac{q_{OPT}}{q_{OPT+1}}, \frac{1-q_{OPT+1}}{1-q_{OPT}} \Bigg\} \cdot A(i,j) \cdot f_{ij}(\vec{R}^*,\vec{w})  \nonumber\\
& \geq \frac{n-1}{n+1} \cdot A(i,j) \cdot f_{ij}(\vec{R}^*,\vec{w}) \nonumber
\end{align}
Now, $\vec{R}^*$ has either $R(q_{OPT}) = 1$ or $R(q_{OPT+1}) = 1$. So we consider imposing such an equality 
constraint in our optimization programs to normalise the maximum of the revenue curve, dropping the optimality 
constraints (\ref{con:Ropt1}) and (\ref{con:Ropt2}) from our optimization program. We are also able to drop the 
constraint (\ref{con:Ropt3}), since it is implied by $R(q_k) = 1$, $R \leq 1$, and the concavity constraints (\ref{con:Rconcave}).

Finally, note that with the uniform gauge, $A(i,j) = 1/n^2$, for any $1  \leq j \leq i \leq n$. Thus, our Riemann sum 
minimization program is:
\begin{align}
\hat{\alpha}(n) = \min_{1 \leq k \leq n+1} \min_{\vec{R},\vec{w}} & \quad \sum_{1 \leq i \leq n} \frac{1}{n^2} \cdot f_{ii}(\vec{R},\vec{w}) 
	+ 2 \cdot \sum_{1 \leq j < i \leq n} \frac{1}{n^2} \cdot f_{ij}(\vec{R},\vec{w})  \label{opt:approx} \\
\text{subject to} &  \quad (\ref{con:Rconcave}),(\ref{con:wx}),(\ref{con:wy}) \nonumber \\
& \quad R(q_k) = 1 \nonumber \\
& \quad \vec{R} \in [0,1]^{n+1} \nonumber \\ 
& \quad \vec{w} \in \{0,1\}^{\binom{n}{2}} \nonumber
\end{align}

Intuitively, since the factor $(n-1)/(n+1)$ in (\ref{eqn:loss}) goes to $1$  as $n$ grows large, this program should be able to approximate $\alpha$: 

\begin{theorem}\label{thm:easyApprox}
As $n \rightarrow \infty$, $\hat{\alpha}(n) \rightarrow \alpha$.
\end{theorem}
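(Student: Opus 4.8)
The plan is to prove convergence by squeezing $\hat{\alpha}(n)$ between $\alpha$ from below and above. The starting observation is that, by Lemma~\ref{lem:wCons}, minimising over $\vec{w}$ in (\ref{opt:approx}) for a fixed $\vec{R}$ reproduces the value of $\phi$ exactly at each sampled midpoint; hence the optimal value of the inner program equals the midpoint Riemann sum
\[ S_n(\tilde{R}) \ :=\ \sum_{1 \le i \le n} \tfrac{1}{n^2}\,\phi_{\tilde{R}}(\bar{q}_i,\bar{q}_i) \ +\ 2\!\!\sum_{1 \le j < i \le n} \tfrac{1}{n^2}\,\phi_{\tilde{R}}(\bar{q}_i,\bar{q}_j), \]
where $\tilde{R}$ is the piecewise-linear interpolant of $\vec{R}$ on the uniform grid. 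Thus $\hat{\alpha}(n) = \min_{\vec{R}} S_n(\tilde{R})$ over feasible, normalised, piecewise-linear concave curves (the outer minimisation over $k$ just selects which grid point is pinned to $1$), and the whole theorem reduces to comparing $S_n(\tilde{R})$ with the genuine integral $\bar{r}_{\tilde{R}}$ and invoking the equivalence (\ref{eqn:equiv}).

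For the bound $\liminf_n \hat{\alpha}(n) \ge \alpha$, I would argue that every feasible $\tilde{R}$ is itself a concave non-negative curve whose continuous maximum is exactly $1$: its grid maximum is forced to $1$ by $R(q_k)=1$ together with $\vec{R} \in [0,1]^{n+1}$, and a concave piecewise-linear function attains its maximum at a breakpoint. Consequently $r_{\tilde{R}}=1$ and, by the equivalence (\ref{eqn:equiv}), $\bar{r}_{\tilde{R}} = \bar{r}_{\tilde{R}}/r_{\tilde{R}} \ge \alpha$. It then suffices to show $S_n(\tilde{R}) \ge \bar{r}_{\tilde{R}} - o(1)$ uniformly in $\tilde{R}$. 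For the reverse bound $\limsup_n \hat{\alpha}(n) \le \alpha$, I would fix $\epsilon>0$ and a concave curve $R$ with $r_R = 1$ and $\bar{r}_R/r_R < \alpha + \epsilon$, sample it on the grid, interpolate to get $\tilde{R}$, and rescale by $M^{-1}$ where $M = \max_i R(q_i)$ so that some grid value equals $1$ and the solution becomes feasible. Since $\phi$ is homogeneous of degree one, this scales the objective to $M^{-1} S_n(\tilde{R})$, and exactly as in the derivation of (\ref{eqn:loss}) — using concavity, non-negativity and $R(q^*)=1$ as in (\ref{con:Ropt1})--(\ref{con:Ropt2}) — one gets $M \ge (n-1)/(n+1)$, so the objective is at most $\tfrac{n+1}{n-1}S_n(\tilde{R})$. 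Uniform continuity of $R$ gives $\tilde{R} \to R$ uniformly, whence the dominated-convergence argument already used for (\ref{eqn:equiv}) yields $\bar{r}_{\tilde{R}} \to \bar{r}_R$; combined with $S_n(\tilde{R}) = \bar{r}_{\tilde{R}} + o(1)$ this gives $\limsup_n \hat{\alpha}(n) \le \bar{r}_R/r_R < \alpha+\epsilon$, and letting $\epsilon \downarrow 0$ closes this direction.

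The hard part, common to both directions, is the uniform estimate $|S_n(\tilde{R}) - \bar{r}_{\tilde{R}}| = O(1/n)$ over all feasible $\tilde{R}$. My plan here is to exploit that on any grid cell on which $\phi_{\tilde{R}}$ equals $R(x)$ throughout (or $R(y)$ throughout) the integrand is affine — because $\tilde{R}$ is linear on each subinterval — so the midpoint rule is exact there. The only cells contributing error are therefore the diagonal cells, which I expect to cost only $O(1/n^2)$ in aggregate since the per-cell discrepancy scales with the local slope and hence with the total variation of $\tilde{R}$ (bounded by $2$), and the cells met by the ``switch curve'' $E(\tilde{R}) = \{\,R(x)(1-y) = 2R(y)(1-x)\,\}$ on which the indicator in (\ref{def:phiR}) flips. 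The key input is the corollary showing that $E(R)$ is the graph of a monotone function: a monotone curve crosses at most $O(n)$ cells of the uniform $n \times n$ grid, and on each such cell the oscillation of $\phi_{\tilde{R}}$ is at most $1$ (as $0 \le \tilde{R} \le 1$) while the area is $1/n^2$, for a total error of $O(1/n)$. I anticipate that the main technical care will be in making the cell-counting and the monotonicity of $E(\tilde{R})$ rigorous \emph{uniformly} in $n$ and $\tilde{R}$, since $\tilde{R}$ itself varies with $n$; this is precisely where Lemma~\ref{lem:increase} and its corollary do the essential work.
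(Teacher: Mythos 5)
Your proposal is correct and takes essentially the same route as the paper: the same sandwich structure (feasible piecewise-linear interpolants have $r_{\tilde R}=1$ and hence $\bar r_{\tilde R}\ge\alpha$ for the lower bound; rescaled grid interpolants of a near-optimal curve with the $(n-1)/(n+1)$ normalization loss for the upper bound), reduced in both directions to a uniform $O(1/n)$ comparison between the midpoint Riemann sum and the integral. Your counting of cells crossed by the monotone switch curve $E(\tilde R)$ is just a repackaging of the paper's argument via Lemma~\ref{lem:wMon} and Lemma~\ref{lem:fewError} that at most $2n-3$ off-diagonal cells are indefinite (with exactness of the midpoint rule on definite cells), so the error estimate is obtained identically.
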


Before we prove this theorem, we emphasize an important \emph{monotonicity} property of $w(x,y)$: it is non-decreasing in 
the first argument and non-increasing in the second argument:

\begin{lemma}\label{lem:wMon}
Suppose that $R$ is concave and non-negative on $[0,1]$, and $w$ is determined as in Lemma~\ref{lem:wCons}. Then 
for any $x,y \in [0,1]^2$ such that $x > y$:
\begin{enumerate}
\item[(i)] If $x' > x$, then $w(x',y) \geq w(x,y)$. 
\item[(ii)] If $x > y' > y$, then $w(x,y') \leq w(x,y)$.
\end{enumerate}
\end{lemma}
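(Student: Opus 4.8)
The plan is to reduce both monotonicity claims to the behaviour of the single \emph{switching quantity}
\[ g(x,y) \ :=\ R(x)(1-y) - 2R(y)(1-x), \]
which, for $x,y \in [0,1)$, has the same sign as $F^{-1}(x) - 2F^{-1}(y)$ because $g(x,y) = (1-x)(1-y)\bigl(F^{-1}(x) - 2F^{-1}(y)\bigr)$ and $(1-x)(1-y) > 0$. Reading off the constraints (\ref{con:wx}) and (\ref{con:wy}) from Lemma~\ref{lem:wCons}, I would first record the exact characterisation of the minimiser: when $g(x,y) > 0$ only $w=1$ is feasible, when $g(x,y) < 0$ only $w=0$ is feasible, and when $g(x,y) = 0$ both are feasible and minimising $R(y) + w\,(R(x)-R(y))$ selects $w=1$ precisely when $R(x) \le R(y)$. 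To make $w$ single-valued on the degenerate subset where $R(x)=R(y)$, I would fix the convention that on the tie set $\{g=0\}$ we set $w=1$ iff $R(x)\le R(y)$; this is a legitimate minimiser selection, and it is exactly the choice that makes the monotonicity hold pointwise.

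Next I would establish that $g$ is monotone in each coordinate. Since $R$ is concave and non-negative, Lemma~\ref{lem:increase} gives that $F^{-1}$ is non-decreasing on $[0,1)$; hence $x' > x$ forces $g(x',y) \ge g(x,y)$, while $y' > y$ forces $g(x,y') \le g(x,y)$. Combined with the sign characterisation, this disposes of every case in which the $g$-inequality is \emph{strict}. For (i): if $w(x,y)=0$ the claim is immediate, and if $w(x,y)=1$ then $g(x,y)\ge 0$, so $g(x',y)\ge 0$ and $w(x',y)=1$ whenever $g(x',y)>0$. For (ii): if $w(x,y)=1$ the claim is immediate, and if $w(x,y)=0$ then $g(x,y)\le 0$, so $g(x,y')\le 0$ and $w(x,y')=0$ whenever $g(x,y')<0$. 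In both parts the only remaining case is when the second point also lands exactly on the tie set $g=0$.

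The hard part will be precisely this boundary case, since there the value of $w$ is governed by the ordering of the revenue values rather than by the sign of $g$, so I must check the tie-breaking convention is consistent across neighbouring tie points. I would argue directly using $R(q) = (1-q)F^{-1}(q)$. In case (i), if $g(x,y)=g(x',y)=0$ then $F^{-1}(x) = 2F^{-1}(y) = F^{-1}(x')$, so since $x' > x$ and $F^{-1}(x)\ge 0$ we get $R(x') = (1-x')F^{-1}(x) \le (1-x)F^{-1}(x) = R(x) \le R(y)$, whence the convention yields $w(x',y)=1$. In case (ii), the symmetric computation from $g(x,y)=g(x,y')=0$ gives $F^{-1}(y')=F^{-1}(y)$, hence $R(y') \le R(y) < R(x)$, forcing $w(x,y')=0$. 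I expect the final subtlety to be the fully degenerate subcase $R(x)=R(y)$ in a tie, which the ordering argument forces to occur only when the common value is $0$; there the convention assigns $w=1$ uniformly, so monotonicity is preserved rather than accidentally broken.
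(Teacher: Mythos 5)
Your proof is correct and follows essentially the same route as the paper's: the same switching quantity $g(x,y)=R(x)(1-y)-2R(y)(1-x)$, monotonicity of its sign in each coordinate via Lemma~\ref{lem:increase}, the non-tie cases forced by constraints (\ref{con:wx}) and (\ref{con:wy}), and the tie case resolved by the ordering $R(x')\le R(x)\le R(y)$ (resp.\ $R(y')\le R(y)<R(x)$) exactly as in the paper --- if anything, your explicit tie-breaking convention and use of weak inequalities handle the degenerate case $F^{-1}(x)=0$ more carefully than the paper's strict inequality $R(x')<R(x)$. Two harmless inaccuracies to correct when writing it up: $g$ itself need not be monotone in each coordinate (only its sign is, via the factorization $g(x,y)=(1-x)(1-y)\bigl(F^{-1}(x)-2F^{-1}(y)\bigr)$ that you state; the paper's proof makes the same overstatement, and in both arguments only sign information is ever used), and ties with $R(x)=R(y)>0$ can in fact occur (whenever $1-y=2(1-x)$ on the tie set), though your convention disposes of them uniformly, so nothing breaks.
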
 

\begin{proof}
Now (i) can equivalently be stated $w(x,y) = 1 \Rightarrow w(x',y) = 1$, which we prove. Note that $R(x) / (1-x)$ is non-decreasing on $[0,1)$, 
which implies that $R(x) \cdot (1-y) - 2 R(y) \cdot (1-x)$ is non-decreasing in $x$ on $[0,1]$. If it is the case that the 
constraint (\ref{con:wx}) does not bind or if $R(x) \cdot (1-y) - 2 R(y) \cdot (1-x)$ strictly increases, then we are done. Else, it 
must be that $R(x') = (1-x') R(x) / (1-x) < R(x)$ and so $\min\{ R(x'), R(y) \} = R(x')$. Therefore, $w(x',y) = 1$. Likewise, (ii)
can be equivalently stated as $w(x,y) = 0 \Rightarrow w(x,y') = 0$. Because $R(y) / (1-y)$ is increasing in $y$, we have 
$R(x) \cdot (1-y) - 2 R(y) \cdot (1-x)$ is 
non-increasing in $y$ on $[0,1]$. From this, (ii) follows by an analogous argument.
\end{proof}

These monotonicity properties of $w$ imply that only few $w(\bar{q}_i,\bar{q}_j)$'s may be ``misspecified''. In particular, for some revenue curve $R$, 
the objective contributions $A(i,j) \cdot f_{ij}(\vec{R},\vec{w})$ all underestimate their corresponding terms in \ref{eqn:integral} except for a vanishing fraction of product intervals $I(i,j)$:

\begin{lemma}\label{lem:fewError}
Let $(\vec{R},\vec{w})$ be a feasible solution of (\ref{opt:approx}), and let $R$ be a revenue curve agreeing with $\vec{R}$ on 
the gauge $(q_i)_{1 \leq i \leq n+1}$. Then for at least $\binom{n}{2} - 2n + 3$ many pairs $(i,j)$ such that $1 \leq j < i \leq n$, $w$ is a 
constant function on $I(i,j)$. In particular, for such pairs $(i,j)$:
$$ \int_{(x,y) \in I(i,j)} \phi_R(x,y) \cdot d(x,y) \geq A(i,j) \cdot \Bigg[ \frac{R(q_i)+R(q_{i+1})}{2} \cdot w(\bar{q}_i,\bar{q}_j) + \frac{R(q_j)+R(q_{j+1})}{2} \cdot (1-w(\bar{q}_i,\bar{q}_j)) \Bigg],$$
with equality if $R$ is the linear interpolation of $\vec{R}$.
\end{lemma}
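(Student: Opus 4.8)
The plan is to split the statement into its two independent assertions: first, the combinatorial claim that the function $w$ (determined from $R$ as in Lemma~\ref{lem:wCons}) is constant on all but at most $2n-3$ of the below-diagonal cells $I(i,j)$; and second, the integral estimate on each such constant cell. The entire argument is driven by the monotonicity established in Lemma~\ref{lem:wMon}: $w$ is non-decreasing in its first argument and non-increasing in its second. I would treat the counting first, since it is the crux, and then dispatch the integral bound by concavity.

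For the counting, I would first record that on any cell $I(i,j)=[q_i,q_{i+1}]\times[q_j,q_{j+1}]$ the monotonicity of $w$ forces its extreme values to occur at two opposite corners: $w$ is maximal at $(q_{i+1},q_j)$ (largest $x$, smallest $y$) and minimal at $(q_i,q_{j+1})$ (smallest $x$, largest $y$). Hence $I(i,j)$ is non-constant exactly when $w(q_{i+1},q_j)=1$ and $w(q_i,q_{j+1})=0$, i.e. when the transition set of $w$ meets the cell. Since this transition set is increasing (which follows directly from Lemma~\ref{lem:wMon}), the non-constant cells form a monotone staircase. Concretely, for each column $i$ let $L_i$ and $U_i$ be the lowest and highest rows $j$ for which $I(i,j)$ is non-constant; monotonicity gives $L_{i+1}\ge U_i$. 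Writing $i_1<\dots<i_m$ for the columns carrying non-constant cells, the number of such cells is $\sum_{k}(U_{i_k}-L_{i_k}+1)$, which telescopes (after inserting the non-negative gaps $L_{i_{k+1}}-U_{i_k}$) to at most $m+(U_{i_m}-L_{i_1})$. Finally I would feed in the range restrictions forced by $j<i\le n$: every non-constant cell has $i\in\{2,\dots,n\}$ and $j\in\{1,\dots,n-1\}$, so $m\le n-1$, $L_{i_1}\ge 1$ and $U_{i_m}\le n-1$, giving at most $(n-1)+(n-2)=2n-3$ non-constant cells and therefore at least $\binom{n}{2}-2n+3$ constant ones. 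Extracting exactly $2n-3$ (rather than the crude staircase bound $2n-1$) from these endpoint constraints is the delicate part, and is where I expect most of the care to be needed.

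For the integral estimate, fix a constant cell $I(i,j)$. If $w\equiv 1$ there then $\phi_R(x,y)=R(x)$ on the cell, so $\int_{I(i,j)}\phi_R\,d(x,y)=(q_{j+1}-q_j)\int_{q_i}^{q_{i+1}}R(x)\,dx$. Concavity of $R$ means the chord lies below the curve, whence $\int_{q_i}^{q_{i+1}}R(x)\,dx\ge (q_{i+1}-q_i)\tfrac{R(q_i)+R(q_{i+1})}{2}$, with equality precisely when $R$ is affine on $[q_i,q_{i+1}]$, i.e. when $R$ is the linear interpolation of $\vec R$. This yields $\int_{I(i,j)}\phi_R\ge A(i,j)\tfrac{R(q_i)+R(q_{i+1})}{2}$. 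Symmetrically, if $w\equiv 0$ then $\phi_R(x,y)=R(y)$ and the same trapezoidal bound produces $A(i,j)\tfrac{R(q_j)+R(q_{j+1})}{2}$. Since the midpoint $(\bar q_i,\bar q_j)$ lies in $I(i,j)$ and $w$ is constant there, $w(\bar q_i,\bar q_j)$ equals this common value, so in both cases the right-hand side is exactly $A(i,j)\,f_{ij}(\vec R,\vec w)$, completing the estimate. The one point I would flag for care is the consistency between the continuous $w$ attached to $R$ and the binary variable in $\vec w$: on a constant cell these agree because the midpoint is not on the (measure-zero) tie set, so the feasibility constraints~(\ref{con:wx}),(\ref{con:wy}) pin $w(\bar q_i,\bar q_j)$ to the cell's constant value.
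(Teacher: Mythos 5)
Your proof is correct, and its counting half takes a genuinely different route from the paper's. The paper classifies each below-diagonal cell by the same two corner values you single out (calling a pair $1$-definite, $0$-definite, or indefinite), but it counts the bad cells by a pigeonhole over anti-diagonals: Lemma~\ref{lem:wMon} implies that each anti-diagonal $\{(i,j) : i+j=k\}$ contains at most one indefinite pair, and $k$ ranges over the $2n-3$ values $3,\dots,2n-1$, which gives the bound immediately. Your staircase argument---column ranges $[L_i,U_i]$, the transition inequality $L_{i_{k+1}} \ge U_{i_k}$, and the telescoping estimate $m + U_{i_m}-L_{i_1} \le (n-1)+(n-2)$---formalizes the same geometric picture (the $0/1$ boundary of a monotone indicator is a monotone staircase) and lands on exactly the same constant $2n-3$; the key inequality $L_{i_{k+1}}\ge U_{i_k}$ does follow from Lemma~\ref{lem:wMon} via the chain $0 = w(q_{i'},q_{j'+1}) \ge w(q_{i+1},q_{j'+1}) \ge w(q_{i+1},q_j) = 1$ whenever two non-constant cells violated it. The paper's anti-diagonal count is shorter; yours is more explicit about the structure of the transition set. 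One housekeeping item you skip: for cells with $j=i-1$ your ``minimal corner'' is the diagonal point $(q_i,q_i)$, where the defining constraints of Lemma~\ref{lem:wCons} degenerate; the paper handles this by explicitly extending $w(q_i,q_i)=0$ (an extension that preserves monotonicity, and is in fact forced by constraint (\ref{con:wy}) whenever $R(q_i)(1-q_i)>0$). The integral half of your argument---trapezoid bound from concavity, equality in the affine case---is identical to the paper's treatment of definite cells.

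One sentence of yours is not justified as stated: constancy of $w$ on a cell does \emph{not} imply that the midpoint avoids the tie set $R(x)(1-y)=2R(y)(1-x)$. The tie locus can run through a cell on which $w$ is constant (for instance along an edge where $F^{-1}=R/(1-\cdot)$ is locally constant), and the tie set being Lebesgue-null says nothing about one specific midpoint. At a midpoint tie, constraints (\ref{con:wx}) and (\ref{con:wy}) leave the binary variable free, so it need not equal the cell's constant value. The paper elides exactly the same issue by declaring that $w$ and $\vec{w}$ ``may be taken to agree'' on the midpoints, so your write-up is no less rigorous than the paper's; but the honest repair is different from the one you give. One can show the bad configuration is vacuous for feasible solutions of (\ref{opt:approx}): a midpoint tie on, say, a $1$-definite cell forces the chain $F^{-1}(q_i) \le F^{-1}(\bar q_i) = 2F^{-1}(\bar q_j) \le 2F^{-1}(q_{j+1}) \le F^{-1}(q_i)$ to collapse to equalities, so $F^{-1}$ is constant on $[\bar q_j, q_{j+1}]$, hence by Lemma~\ref{lem:increase} constant on $[\bar q_j,1)$; the tie then yields $F^{-1}(\bar q_j)=0$, and concavity with non-negativity propagates $R \equiv 0$ on $[0,1)$, contradicting the normalization $R(q_k)=1$. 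With that observation (or the paper's convention) inserted in place of your measure-zero remark, your proof is complete.
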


\begin{proof}
We know that $w$ is determined as in Lemma~\ref{lem:wCons}, that $w$ and $\vec{w}$ may be taken to agree 
on $(\bar{q}_i,\bar{q}_j)_{1 \leq j < i \leq n}$, and that $w$ satisfies monotonicity by Lemma~\ref{lem:wMon}. Now, 
extend $w$ to points of the form $(q_i,q_j)_{1 \leq j \leq i \leq n+1}$ by setting $w(q_i,q_i) = 0$ for any $1 \leq i \leq n$. 
We remark that this extension of $w$ will still satisfy monotonicity.

We now define a notion of \emph{constantness} for $w$ on any $I(i,j)$ with $j < i$. We will say that the pair $(i,j)$ 
is \emph{$1$-definite} if $w(q_{i+1},q_j) = w(q_i,q_{j+1})=1$, and \emph{$0$-definite} if $w(q_{i+1},q_j) = w(q_i,q_{j+1})=1$. 
Else, by the monotonicity of $w$, it must be that $w(q_{i+1},q_j) = 1$ and $w(q_i,q_{j+1})=0$; we call such a pair $(i,j)$ \emph{indefinite}. 
Then, by the monotonicity of $w$, it holds that:
\begin{enumerate}
\item If $(i,j)$ is $1$-definite then $(i+1,j-1)$ is $1$-definite.
\item If $(i,j)$ is $0$-definite and $j+1 < i-1$, then $(i-1,j+1)$ is $0$-definite.
\item If $(i,j)$ is indefinite then $(i+1,j-1)$ is $1$-definite, and if also $j+1 < i-1$, then $(i-1,j+1)$ is $0$-definite.
\end{enumerate}
Therefore, for each $3 \leq k \leq 2\cdot n-1$, the set of pairs $\{ (i,j) | 1 \leq j < i \leq n \text{ and } i+j = k\}$ contains at 
most one indefinite pair. There are only $2n-3$ such possible values of $k$. Now, if the pair $(i,j)$ is $1$-definite, then
\begin{equation*}
\int_{(x,y) \in I(i,j)} \phi_R(x,y) \cdot d(x,y) = \int_{(x,y) \in I(i,j)} R(x) \cdot d(x,y) \geq A(i,j) \cdot \frac{R(q_i)+R(q_{i+1})}{2},
\end{equation*}
where the inequality holds due to concavity of $R$. If $R$ is an affine function on $I(i)$, then the inequality in fact holds with 
equality. A similar (in)equality holds if the pair $(i,j)$ is $0$-definite, which implies the result.
\end{proof}

This result allows us to prove Theorem \ref{thm:easyApprox}, and provide within the proof an explicit error estimate for $\hat\alpha(n)$:

\begin{proof}(of Theorem \ref{thm:easyApprox}) For fixed $n \in \mathbb{N}$, let $R$ be the linear interpolation 
of a minimum primal solution of (\ref{opt:approx}). Then,
$$\int_{(x,y) \in I(i,j)} \phi_R(x,y) \cdot d(x,y) = \frac{1}{n^2} \cdot \Bigg[\frac{R(q_i)+R(q_{i+1})}{2} \cdot w(\bar{q}_i,\bar{q}_j) + \frac{R(q_j)+R(q_{j+1})}{2} \cdot (1-w(\bar{q}_i,\bar{q}_j)) \Bigg]$$
for any $1 \leq j < i \leq n$, except at most $2n-3$ many by Lemma \ref{lem:fewError}. Also note that we may be misspecifying the contribution 
of areas of the form $(i,i)$. As $R \leq 1$, if we estimate $\phi_R(x,y) = 1 + f_{ij}(\vec{R},\vec{w})$ on such areas, 
we overestimate $\bar{r}_R$. Therefore, 
$\bar{r}_R \leq (5n-6) / n^2 + \hat{\alpha}(n)$. 
Since $\bar{r}_R \geq \alpha$, by the infimum property of $\alpha$, letting $n$ be sufficiently large such 
that $(5n-6)/n^2 \leq \epsilon$, we have $\hat\alpha(n) \geq \alpha - \epsilon$.

To show the upper bound on $\hat\alpha(n)$, let $\epsilon,\delta > 0$. Let $R_\delta$ be a revenue curve such 
that $\bar{r}_{R_\delta} / r_{R_\delta} \leq \alpha + \delta$, normalised 
such that $\max_{q \in [0,1]} R_\delta(q) = 1$. Set $M(n) = \max_{q \in [0,1]} R_\delta^{|n}(q)$ for any $n$. 
Then $\lim_{n \rightarrow \infty} M(n) = 1$. Also, since $R_\delta$ 
is regular, $M(n) \geq (n-1)/(n+1)$. Towards this end, we choose $n \in \mathbb{N}$ such that
$$ \frac{2}{n-1} + \frac{5n-6}{n^2} < \epsilon.$$

Now, set $R_\delta^n = R^{|n} / M(n)$, the renormalized linear approximation to $R_\delta$. Then notice that an interval product $I(i,j)$ is 
definite for $R_\delta$ if and only if it is definite for $R_\delta^n$. Hence for any definite $I(i,j)$, the Riemann sum contribution of $\phi_{R_\delta}/ M(n)$ is 
greater than that of $\phi_{R_\delta^n}$. If $I(i,j)$ is instead any interval, the Riemann sum contribution of $\phi_{R_\delta^n}$ at $I(i,j)$ is at most $1/n^2$. Therefore,
\begin{align*}
\sum_{1 \leq i,j \leq n} & \frac{1}{n^2} \cdot \phi_{R_\delta^n}(\bar{q}_i,\bar{q}_j) \\ 
& = \frac{2}{n^2} \cdot \Bigg[ \sum_{\substack{1 \leq j < i \leq n \\ I(i,j) \text{ definite}}} \phi_{R_\delta^n}(\bar{q}_i,\bar{q}_j) + 
\sum_{\substack{1 \leq j < i \leq n \\ I(i,j) \text{ indefinite}}} \phi_{R_\delta^n}(\bar{q}_i,\bar{q}_j) \Bigg] + 
\sum_{1 \leq i \leq n} \frac{1}{n^2} \cdot \phi_{R_\delta^n}(\bar{q}_i,\bar{q}_i) \\
& \leq  \sum_{1 \leq i,j \leq n}  \int_{(x,y) \in I(i,j)} \phi_{R_\delta/M(n)}(\bar{q}_i,\bar{q}_j) \cdot d(x,y) + \frac{4n-6}{n^2} + \frac{n}{n^2} \\
& = \frac{n-1}{n+1}  \cdot \sum_{1 \leq i,j \leq n} \int_{(x,y) \in I(i,j)} \phi_{R_\delta}(\bar{q}_i,\bar{q}_j) \cdot d(x,y) + \frac{5n-6}{n^2} \\
& = \frac{n-1}{n+1} \cdot \bar{r}_{R_\delta} + \frac{5n-6}{n^2} \leq \bar{r}_{R_\delta} + \frac{2}{n-1} + \frac{5n-6}{n^2} \leq \alpha + \delta + \epsilon
\end{align*}
Here, the first inequality follows from Lemma \ref{lem:fewError}, the second equality follows from homonegeity of $\phi$ in revenue curves, and the second 
inequality follows since $\bar{r}_{R_\delta} \leq 1$ by normalisation of $R_\delta$. However, 
$$ \sum_{1 \leq i,j \leq n} \frac{1}{n^2} \cdot \phi_{R_\delta^n}(\bar{q}_i,\bar{q}_j) \geq \hat\alpha(n),$$
as the sum equals the value of (\ref{opt:approx}) when we plug in feasible solution $\vec{R} = ( R^n_\delta(q_i) )_{1 \leq i \leq n+1}$. Combining these, we get the inequality
\begin{equation*}
\hat\alpha(n) 
\ \leq\  \sum_{1 \leq i,j \leq n} \frac{1}{n^2} \cdot \phi_{R_\delta^n}(\bar{q}_i,\bar{q}_j) 
\ \leq\ \alpha + \delta + \epsilon.
\end{equation*}
Since our choice of $\delta$ is independent from our choice of $\epsilon$ and $n$, we may take $\delta \downarrow 0$, from which the theorem follows.
\end{proof}

The convergence result suggests a natural optimization scheme to find an approximately minimal distribution -- we linearize the 
terms of the form $R(q_\ell) \cdot w(\bar{q}_i,\bar{q}_j)$ in the objective and the constraints, adding in the constraints from the 
second-order Sherali-Adams lift of (\ref{opt:approx}) that include such terms. In particular, we add in the constraints
\begin{align}
Rw(\ell,i,j) & \geq 0 \label{con:ASapprox} \\
w(\bar{q}_i,\bar{q}_j) - Rw(\ell,i,j) & \geq 0 \nonumber \\
R(\bar{q}_\ell) - Rw(\ell,i,j) & \geq 0 \nonumber \\
-R(\bar{q}_\ell) - w(\bar{q}_i,\bar{q}_j) +Rw(\ell,i,j)  & \geq -1 \qquad \forall \  1 \leq j < i \leq n,\, \ell \in \{i,j\}, \nonumber 
\end{align}
where $Rw(\ell,i,j)$ is a variable representing the product $R(\bar{q}_\ell) \cdot w(\bar{q}_i,\bar{q}_j)$. We then replace the product 
terms in constraints (\ref{con:wx}), (\ref{con:wy}) and in the objective with the corresponding linearized variable.  Note that these 
constraints imply that $Rw(\ell,i,j) = R(\bar{q}_\ell) \cdot w(\bar{q}_i,\bar{q}_j)$ whenever $w(\bar{q}_i,\bar{q}_j)$ is $\{0,1\}$-valued. 
Therefore, the mixed-integer LP formulation is exact.

Finally, we impose the monotonicity constraints, implied by Lemma~\ref{lem:wMon}, even though these constraints 
are \emph{redundant} for our formulation (cf. Section \ref{sec:perf}). This implies that our approximate MILP has the following form:
\begin{align}
\min_{1 \leq k \leq n+1} \min_{\vec{R},\vec{w},Rw} & \quad \sum_{1 \leq i \leq n} \frac{1}{n^2} \cdot R(\bar{q}_i) + \sum_{1 \leq j < i \leq n}\frac{2}{n^2} \cdot (Rw(i,i,j) + R(\bar{q}_j) - Rw(j,i,j)) \label{opt:approx2} \\
\text{subject to} &  \quad (\ref{con:Rconcave}),(\ref{con:wx}),(\ref{con:wy}),\ref{con:ASapprox} \nonumber \\
& \quad w(\bar{q}_i,\bar{q}_j) \leq w(\bar{q}_{i+1},\bar{q}_j) \ \forall \ 1 \leq j < i < n \nonumber \\
& \quad w(\bar{q}_i,\bar{q}_j) \geq w(\bar{q}_i,\bar{q}_{j+1}) \ \forall \ 1 \leq j < i+1 \leq n \nonumber \\
& \quad R(q_k) = 1 \nonumber \\
& \quad \vec{R} \in [0,1]^{n+1} \nonumber \\ 
& \quad \vec{w} \in \{0,1\}^{\binom{n}{2}} \nonumber \\
& \quad Rw \in [0,1]^{2 \times \binom{n}{2}} \nonumber
\end{align}

\subsection{Lower Bound: A Cubic Formulation}\label{sec:progLB}

In this section, we construct an MILP with the explicit aim of obtaining lower bounds for $\alpha$. The MILP (\ref{opt:approx2}) does 
provide certifiable lower bounds for $\alpha$ by Theorem \ref{thm:easyApprox}. However, the exponential nature of the problem kicks in 
before we can certify any significant improvement on the lower bound of $\simeq .558$ provided Daskalakis and Zampetakis~\cite{DZ20}.

We will work around this problem by considering a cubic program which, given a gauge, lower bounds the contribution of any area element. 
Handling the contribution of area elements on the diagonal will be straightforward, but the contributions from the off-diagonal area 
elements will require care. Towards this end, we will use Lemma~\ref{lem:wMon}, and lower bound this contribution conditional on the 
definiteness of the area element.

To construct such a lower bound program, we first need to fix our gauges: for the general formulation of the problem (\ref{opt:general}), 
we find a set of gauges $(\vec{q}^k)_{k \in I}$ with prescribed \emph{optimal intervals} $[q^k_{OPT^k},q^k_{OPT^k+1}]$ such 
that $\cup_{k \in I} [q^k_{OPT^k}, q^k_{OPT^k+1}] = [0,1]$. As evidenced by (\ref{eqn:loss}), we will want the freedom to 
pick $q^k_{OPT^k + 1} -  q^k_{OPT^k}$ small for each gauge $\vec{q}^k$ to minimize the loss from relaxing the optimality 
constraint to (\ref{con:Ropt1}) and (\ref{con:Ropt2}). To this end, for some $N \in \mathbb{N}$ ``significantly larger'' than $n$, we 
will set $J = \{1,2,...,N\}$ and:
\begin{align*}
q^k_{OPT^k} & = \frac{k-1}{N} \\
q^k_{OPT^k+1} & = \frac{k}{N}.
\end{align*}
Then by (\ref{eqn:loss}), we expect degredations on the quality of the lower bound caused by the optimality constraints to be of 
order $\sim 1/N$ as we impose larger $N$. Note that this only comes at a linear cost of having to compute $N$ MILPs.

Next, we need to decide on where to evaluate each $w(q_i,q_j)$. By Lemma \ref{lem:wMon}, to decide on the definiteness of an area 
element $[q_i,q_{i+1}] \times [q_j,q_{j+1}]$ for $1 \leq j < i \leq n$, we need to check $w(q_{i+1},q_{j})$ and $w(q_i,q_{j+1})$. 
Due to this constraint, we also need to assign a value to $w$ on $(q_i,q_i)_{1 \leq i \leq n+1}$. The defining constraints (\ref{con:wx}) 
and (\ref{con:wy}) become degenerate on such points. Instead we will opt to always fix $w(q_i,q_i) = 0$, as such an assignment 
respects monotonicity and we wish to avoid adding even more binary variables.

We are now ready to derive lower bounds on the contribution of each area element. As promised, lower bounding the contribution 
of a diagonal area element is simple:

\begin{lemma}\label{lem:diagbound}
Suppose $R : [0,1] \rightarrow \mathbb{R}_+$ is concave and $I^k(OPT^k) \cap \arg \max_{q \in [0,1]} R(q)$ is non-empty. 
Then the following hold:
\begin{enumerate}
\item[(i)] If $i < OPT^k$, then
$\int_{(x,y) \in I^k(i,i)} \phi_R(x,y)\cdot d(x,y) \geq A_k(i,i) \cdot \Bigg( \frac{2R(q^k_i)}{3} + \frac{R(q^k_{i+1})}{3} \Bigg) $.
\item[(ii)] If $i = OPT^k$, then
$ \int_{(x,y) \in I^k(i,i)} \phi_R(x,y)\cdot d(x,y) \geq 0$.
\item[(iii)] If $i > OPT^k$, then
$ \int_{(x,y) \in I^k(i,i)} \phi_R(x,y)\cdot d(x,y) \geq A_k(i,i) \cdot \Bigg( \frac{R(q^k_i)}{3} + \frac{2R(q^k_{i+1})}{3} \Bigg) $.
\end{enumerate}
\end{lemma}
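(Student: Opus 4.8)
The plan is to bound the diagonal integral $\int_{(x,y) \in I^k(i,i)} \phi_R(x,y)\,d(x,y)$ by understanding the structure of $\phi_R$ on a diagonal square $[q_i,q_{i+1}]^2$. On such a square both arguments lie in the same interval, so neither coordinate dominates the other in a fixed way; the integral splits into the region $\{x > y\}$ and $\{x < y\}$, which by the symmetry $\phi_R(x,y) = \phi_R(y,x)$ contribute equally. On the triangle $\{x > y\}$, recall from the definition \eqref{def:phiR} that $\phi_R(x,y)$ equals $R(x)$, $R(y)$, or $\min\{R(x),R(y)\}$ according to the sign of $F^{-1}(x) - 2F^{-1}(y)$. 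The key simplification is that on a diagonal square the decision is governed by whether $R$ is increasing or decreasing there, which in turn is controlled by the position of $i$ relative to $OPT^k$.

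The main case analysis proceeds as follows. First I would treat case (i), $i < OPT^k$, where the peak of $R$ lies to the right of $I^k(i,i)$; here $R$ is weakly increasing on $[q_i^k, q_{i+1}^k]$ by concavity. On the triangle $\{x > y\}$ we then have $R(x) \geq R(y)$, and I claim the relevant indicator forces $\phi_R(x,y) \geq R(y)$ pointwise (since when $F^{-1}(x) > 2F^{-1}(y)$ we pick up $R(x) \geq R(y)$, and otherwise we pick up at least $\min\{R(x),R(y)\} = R(y)$). Lower-bounding $\phi_R$ by $R(y)$ on $\{x>y\}$ and by $R(x)$ on $\{x<y\}$, then using concavity to replace $R$ by its affine interpolant (which gives a \emph{lower} bound for the smaller-argument evaluations), I would integrate the resulting linear function over the square. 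The weights $\tfrac{2}{3}$ and $\tfrac{1}{3}$ in the statement are exactly what arises from integrating the affine interpolant weighted toward the smaller endpoint: on the triangle $\{x>y\}$ the integral of $R(\min(x,y))$-type terms against area concentrates mass on $q_i^k$. Case (iii) is the mirror image with $R$ decreasing, shifting the weight to $q_{i+1}^k$. Case (ii) is trivial since $\phi_R \geq 0$ always.

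The step I expect to be the main obstacle is rigorously justifying the pointwise bound $\phi_R(x,y) \geq R(y)$ (on $\{x>y\}$, case (i)) in the regime where the indicator $\mathbb{I}[F^{-1}(x) < 2F^{-1}(y)]$ fires — one must check that even in that subregion the value attained is at least $R(y)$, and similarly handle the measure-zero equality set where $\min\{R(x),R(y)\}$ appears. Since $R$ is increasing on the interval, $R(x) \geq R(y)$ there, so $\min\{R(x),R(y)\} = R(y)$ and the bound holds; the genuinely subtle point is confirming that the branch selecting $R(y)$ versus $R(x)$ never drops below $R(y)$, which follows because both candidate values $R(x), R(y)$ and their minimum are all $\geq R(y)$. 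After establishing the pointwise inequality, the remaining computation — doubling the triangle contribution and integrating the affine lower bound to extract the $\tfrac{2}{3}, \tfrac{1}{3}$ coefficients — is a routine exact integral over a square that I would not belabor.
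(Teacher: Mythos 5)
Your proposal is correct and follows essentially the same route as the paper's proof: both reduce to the pointwise bound $\phi_R(x,y) \geq \min\{R(x),R(y)\}$, pass to the affine interpolant of $R$ on $I^k(i)$ (a lower bound by concavity), and then evaluate the resulting integral exactly — the paper phrases this last step as the expected minimum of two independent uniform draws on $R'(I^k(i))$, which is precisely your triangle computation yielding the $\tfrac{2}{3},\tfrac{1}{3}$ weights. The monotonicity of $R$ on $I^k(i)$ (from concavity and the location of the maximizer relative to $I^k(OPT^k)$) that you invoke is the same fact the paper uses for its interpolant, so there is no gap.
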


\begin{proof}
Observe that (ii) is trivial, so we show (i) and (iii). Let $R'$ be the concave function on $[0,1]$, obtained by linear interpolation 
on $(R(q_i))_{1 \leq i \leq n+1}$. Then $R \geq R'$ on $[0,1]$, which implies that:
$$ \int_{(x,y) \in I^k(i,i)} \phi_R(x,y)\cdot d(x,y) \geq \int_{(x,y) \in I^k(i,i)} \min \{ R'(x), R'(y) \} \cdot d(x,y).$$
$R'$ is non-decreasing on $[0,q^k_{OPT^k}]$ and non-increasing on $[q^k_{OPT^k+1},1]$, and is affine on any $I^k(i)$. 
Thus for a lower bound for the revenue contribution from the product interval $I^k(i,i)$, we evaluate the minimum of two independent 
variables drawn from the uniform distribution on $R'(I^k(i))$, which implies that (i) and (iii) hold.
\end{proof}

Next, the lower bounds on off-diagonal area elements, in turn, are dependent on whether the area element is definite:

\begin{lemma}\label{lem:nondiagbound}
Suppose $R : [0,1] \rightarrow \mathbb{R}_+$ is concave, $1 \leq j < i \leq n$, and $[q^k_{OPT^k},q^k_{OPT^k+1}] \cap \arg \max_{q \in [0,1]} R(q)$ 
is non-empty. Then the following hold:
\begin{enumerate}
\item[(a)] If the pair $(i,j)$ is $1$-definite, then:
$$ \int_{(x,y) \in I(i,j)} \phi_R(x,y) \cdot d(x,y) \geq A_k(i,j) \cdot \frac{R(q^k_{i}) + R(q^k_{i+1})}{2}.$$
\item[(b)]  If the pair $(i,j)$ is $0$-definite, then:
$$ \int_{(x,y) \in I(i,j)} \phi_R(x,y) \cdot d(x,y) \geq A_k(i,j) \cdot \frac{R(q^k_{j}) + R(q^k_{j+1})}{2}.$$
\item[(c)]  If the pair $(i,j)$ is indefinite and $i < OPT^k$, then:
$$ \int_{(x,y) \in I(i,j)} \phi_R(x,y) \cdot d(x,y) \geq A_k(i,j) \cdot \frac{R(q^k_{j}) + R(q^k_{j+1})}{2}.$$
\item[(d)]  If the pair $(i,j)$ is indefinite and $j > OPT^k$, then:
$$ \int_{(x,y) \in I(i,j)} \phi_R(x,y) \cdot d(x,y) \geq A_k(i,j) \cdot \frac{R(q^k_{i}) + R(q^k_{i+1})}{2}.$$
\item[(e)]  If the pair $(i,j)$ is indefinite, $i \geq OPT^k$ and $j \leq OPT^k$, then
$$ \int_{(x,y) \in I(i,j)} \phi_R(x,y) \cdot d(x,y) \geq A_k(i,j) \cdot \mathbb{E}[\min\{\underline{R}(x),\underline{R}(y)\}|(x,y) \in I(i,j)],$$
where $\underline{R}$ is the minimum concave and non-negative function on $[0,1]$ satisfying (\ref{con:Ropt1}) and (\ref{con:Ropt2}).
\end{enumerate}
\end{lemma}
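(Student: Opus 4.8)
The plan is to treat all five cases through one mechanism and vary only the final pointwise comparison. The unifying observation is that, by Lemma~\ref{lem:wCons}, $\phi_R(x,y)$ equals $R(x)$ when $w(x,y)=1$ and $R(y)$ when $w(x,y)=0$ (and their minimum on the measure-zero set where the defining inequality is tight), so $\phi_R(x,y)\ge \min\{R(x),R(y)\}$ everywhere. Moreover, by the monotonicity of $w$ in Lemma~\ref{lem:wMon}, on the cell $I(i,j)$ the value $w(q^k_{i+1},q^k_j)$ is the maximum of $w$ and $w(q^k_i,q^k_{j+1})$ its minimum; hence $w$ is \emph{constant} on $I(i,j)$ exactly when $(i,j)$ is definite. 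The only analytic input is the elementary chord bound $\int_{q^k_\ell}^{q^k_{\ell+1}} R \ \ge\ (q^k_{\ell+1}-q^k_\ell)\cdot\tfrac{R(q^k_\ell)+R(q^k_{\ell+1})}{2}$ for concave $R$, already exploited in Lemma~\ref{lem:fewError}. I would establish each case by combining a pointwise lower bound on $\phi_R$ with this chord bound.

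For (a), $1$-definiteness forces $w\equiv 1$, so $\phi_R(x,y)=R(x)$ throughout $I(i,j)$; integrating out $y$ over $I^k(j)$ and applying the chord bound on $I^k(i)$ yields the claimed $\tfrac{R(q^k_i)+R(q^k_{i+1})}{2}$. Case (b) is the mirror image with $w\equiv 0$ and $\phi_R=R(y)$. For (c), the hypothesis $i<OPT^k$ places $I(i,j)$ entirely to the left of the peak, where $R$ is non-decreasing; since every $(x,y)\in I(i,j)$ satisfies $x\ge q^k_i\ge q^k_{j+1}\ge y$, we get $R(x)\ge R(y)$, so $\phi_R(x,y)\in\{R(x),R(y)\}$ is bounded below by $R(y)$, and the chord bound on $I^k(j)$ finishes it. Case (d) is the reflection to the right of the peak ($j>OPT^k$ forces $R$ non-increasing and $R(x)\le R(y)$, hence $\phi_R\ge R(x)$).

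Case (e) is the genuinely two-dimensional one and is where I expect the work to concentrate. When $i\ge OPT^k$ and $j\le OPT^k$ the cell straddles the peak, so neither $R(x)\ge R(y)$ nor the reverse holds uniformly and $w$ may actually switch inside $I(i,j)$; the best pointwise bound available is $\phi_R(x,y)\ge \min\{R(x),R(y)\}$. To make this independent of the particular curve, I would show $R\ge \underline{R}$ pointwise: since $R$ is concave, non-negative, normalized with $\max R=1$, and its peak lies in $[q^k_{OPT^k},q^k_{OPT^k+1}]$, the argument preceding (\ref{con:Ropt1}) and (\ref{con:Ropt2}) shows $R$ satisfies those two constraints, whence $R\ge \underline{R}$ by the very definition of $\underline{R}$ as the minimal such concave function. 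Then $\min\{R(x),R(y)\}\ge \min\{\underline{R}(x),\underline{R}(y)\}$, and integrating over $I(i,j)$ and dividing by $A_k(i,j)$ produces exactly $\mathbb{E}[\min\{\underline{R}(x),\underline{R}(y)\}\mid (x,y)\in I(i,j)]$.

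The main obstacle is case (e): one must verify that $R\ge\underline{R}$ holds for \emph{every} admissible $R$ (which hinges on confirming that the optimality constraints (\ref{con:Ropt1}) and (\ref{con:Ropt2}) are genuinely forced by the peak lying in the optimal interval under the normalization $\max R=1$), and accept that, unlike (c) and (d), the $\min$ cannot be collapsed to a single endpoint average — the joint expectation over the cell must be retained. To make the bound usable downstream one also wants the explicit piecewise-linear form of $\underline{R}$ (slope $1/q^k_{OPT^k+1}$ on $[0,q^k_{OPT^k}]$ and slope $-1/(1-q^k_{OPT^k})$ on $[q^k_{OPT^k+1},1]$), but the inequality $\phi_R\ge\min\{\underline{R}(x),\underline{R}(y)\}$ is the crux.
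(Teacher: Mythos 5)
Your proposal is correct and takes essentially the same route as the paper's own proof: definiteness plus the monotonicity of $w$ (Lemma~\ref{lem:wMon}) pins $\phi_R$ to $R(x)$ or $R(y)$ on the cell for (a)/(b), one-sided monotonicity of $R$ around the peak collapses the minimum to one coordinate in (c)/(d), and the pointwise domination $R \geq \underline{R}$ (the concave-closure property, which as you note relies on the normalization forcing (\ref{con:Ropt1}) and (\ref{con:Ropt2})) yields (e). No gaps.
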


\begin{proof}
If the pair $(i,j)$ is $1$-definite, then
\begin{align*}
\int_{(x,y) \in I(i,j)} \phi_R(x,y) \cdot d(x,y) & = A_k(i,j) \cdot \mathbb{E}[R(x) | q^k_i \leq x \leq q^k_{i+1}] \\
& \geq A_k(i,j) \cdot \frac{R(q^k_{i}) + R(q^k_{i+1})}{2},
\end{align*}
where the equality holds since $w(x,y) = 1$ on $I(i,j)$ and the inequality holds by the concavity of $R$. This gives (a). 
The case (b) when $(i,j)$ is $0$-definite holds similarly.

Now suppose that the pair $(i,j)$ is indefinite. Then we use the inequality
$$\int_{(x,y) \in I(i,j)} \phi_R(x,y) \cdot d(x,y) \geq A_k(i,j) \cdot \mathbb{E}[\min\{R(x),R(y)\}|(x,y) \in I(i,j)].$$
If $i < OPT_k$, then $R(y) \leq R(x)$ for $(x,y) \in I(i,j)$ as $R$ is concave and obtains its maximum at a 
point $q^* \geq q^k_{OPT^k} \geq x \geq y$. This implies (c) as:
\begin{align*}
 A_k(i,j) \cdot \mathbb{E}[\min\{R(x),R(y)\}|(x,y) \in I(i,j)] & = A_k(i,j) \cdot \mathbb{E}[R(y)|q^k_j \leq y \leq q^k_{j+1}] \\
& \geq A_k(i,j) \cdot \frac{R(q^k_{j}) + R(q^k_{j+1})}{2}.
\end{align*}
The case (d) where $j > OPT_k$ follows analogously. If neither case holds, then by the concave closure 
property we have $\underline{R} \leq R$. Therefore,
$$A_k(i,j) \cdot \mathbb{E}[\min\{R(x),R(y)\}|(x,y) \in I(i,j)] \geq A_k(i,j) \cdot \mathbb{E}[\min\{\underline{R}(x),\underline{R}(y)\}|(x,y) \in I(i,j)].$$ 
Thus (e) holds.
\end{proof}

This allows us to write a cubic expression which lower bounds the contribution from an off-diagonal area element to the revenue:

\begin{corollary}\label{cor:nondiagbound}
Suppose $R : [0,1] \rightarrow \mathbb{R}_+$ is concave, $1 \leq j < i \leq n$, 
and $[q^k_{OPT^k},q^k_{OPT^k+1}] \cap \arg \max_{q \in [0,1]} R(q)$ is non-empty. 
Let $f_{1ij}(\vec{R}),f_{0ij}(\vec{R}),f_{\iota ij}(\vec{R})$ be respectively the lower bounds on the revenue 
contribution from the area element $I(i,j)$, conditional respectively on the pair $(i,j)$ being $1$-definite, $0$-definite 
or indefinite as in Lemma~\ref{lem:nondiagbound}. Then:
\begin{align*}
\int_{(x,y) \in I(i,j)} \phi_R(x,y) \cdot d(x,y) & \geq A_k(i,j)f_{1ij}(\vec{R})w(q^k_{i+1},q^k_{j})w(q^k_{i},q^k_{j+1}) \\
& \quad\quad + A_k(i,j)f_{0ij}(\vec{R})(1-w(q^k_{i+1},q^k_{j}))(1-w(q^k_{i},q^k_{j+1})) \\
& \quad\quad + A_k(i,j)f_{\iota ij}(\vec{R})w(q^k_{i+1},q^k_{j}))(1-w(q^k_{i},q^k_{j+1})) \\
& \quad\quad + A_k(i,j)f_{\iota ij}(\vec{R})(1-w(q^k_{i+1},q^k_{j}))w(q^k_{i},q^k_{j+1})
\end{align*}
\end{corollary}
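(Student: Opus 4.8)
The plan is to recognize the right-hand side as a Boolean \emph{selector} that, for any fixed concave $R$, evaluates to exactly one of the three case-specific lower bounds supplied by Lemma~\ref{lem:nondiagbound}. Throughout, $w$ is the function determined from $R$ as in Lemma~\ref{lem:wCons}, so the two corner values $w(q^k_{i+1},q^k_{j})$ and $w(q^k_{i},q^k_{j+1})$ are genuine $\{0,1\}$ indicators; these are precisely the evaluations whose values decide the definiteness of the pair $(i,j)$. The argument is then a short case check: show that the four product terms are mutually exclusive indicators of the possible definiteness types, verify that the one spurious combination cannot occur, and invoke Lemma~\ref{lem:nondiagbound} in each surviving case.

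First I would enumerate the four assignments of the pair $\big(w(q^k_{i+1},q^k_{j}),\, w(q^k_{i},q^k_{j+1})\big)$ and match each product to a definiteness type: $w(q^k_{i+1},q^k_{j})\,w(q^k_{i},q^k_{j+1})$ equals $1$ exactly when both corners are $1$, i.e. when $(i,j)$ is $1$-definite; $(1-w(q^k_{i+1},q^k_{j}))(1-w(q^k_{i},q^k_{j+1}))$ equals $1$ exactly when $(i,j)$ is $0$-definite; and $w(q^k_{i+1},q^k_{j})(1-w(q^k_{i},q^k_{j+1}))$ equals $1$ exactly when $(i,j)$ is indefinite. For $\{0,1\}$-valued corners, these three products, together with the fourth one, are mutually exclusive and sum to $1$.

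The crucial step is to dispose of the fourth term, which carries the coefficient $f_{\iota ij}$ but corresponds to the assignment $w(q^k_{i+1},q^k_{j})=0$, $w(q^k_{i},q^k_{j+1})=1$. By Lemma~\ref{lem:wMon}, $w$ is non-decreasing in its first argument and non-increasing in its second, so $w(q^k_{i+1},q^k_{j}) \geq w(q^k_{i},q^k_{j+1})$; hence this assignment never arises and the fourth product is identically $0$. Consequently exactly one of the first three selectors equals $1$, and the entire right-hand side collapses to the single term $A_k(i,j) \cdot f_{\bullet ij}(\vec{R})$ whose subscript matches the true definiteness of $(i,j)$. Since $i$, $j$, and $OPT^k$ are fixed, $f_{\iota ij}$ is unambiguously the bound from whichever of the subcases (c), (d), (e) of Lemma~\ref{lem:nondiagbound} applies, so each collapsed term is a legitimate lower bound for $\int_{(x,y) \in I(i,j)} \phi_R(x,y)\cdot d(x,y)$; the desired inequality then follows directly from the matching part of Lemma~\ref{lem:nondiagbound}.

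I expect the only real obstacle to be bookkeeping rather than depth: one must ensure that each $f_{\bullet ij}$ is applied \emph{only} under its own definiteness type, since the bounds of Lemma~\ref{lem:nondiagbound} hold only case-by-case. This is exactly what the mutual exclusivity of the selectors guarantees, and it is why the monotonicity of $w$ is indispensable — without $w(q^k_{i+1},q^k_{j}) \geq w(q^k_{i},q^k_{j+1})$ the spurious fourth term could activate and the cubic expression might overshoot the integral, breaking the inequality.
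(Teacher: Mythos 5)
Your proposal is correct and matches the paper's own argument: the paper likewise observes that under each definiteness type exactly one selector product equals $1$, so the right-hand side collapses to the corresponding $A_k(i,j)f_{\bullet ij}(\vec{R})$ bound from Lemma~\ref{lem:nondiagbound}, with the fourth term vanishing identically by the monotonicity of $w$ (Lemma~\ref{lem:wMon}). Your write-up is simply a more explicit version of the paper's two-line case check, including its remark that the fourth term is redundant for integral $\vec{w}$.
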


\begin{proof}
If the pair $(i,j)$ is $1$-definite, then the RHS equals $A_k(i,j)f_{1ij}(\vec{R})$, which by 
Lemma~\ref{lem:nondiagbound} is indeed a lower bound on the integral. The other cases follow similarly.
\end{proof}

Note that the fourth term of the lower bound in Corollary \ref{cor:nondiagbound} is \emph{redundant} -- 
it will equal zero for any integral solution for $\vec{w}$ by monotonicity. Still, the term allows us to gain some more 
strength in the LP relaxation of the program, so we retain it in our final formulation.

Given a gauge $(\vec{q}^k)$, a lower bound function $f_{ij}(\vec{R},\vec{w})$ for each $1 \leq j \leq i \leq n$ is then provided 
by Lemma~\ref{lem:diagbound} and Corollary \ref{cor:nondiagbound}. To linearize the objective function, we again consider 
incorporating the relevant variables from the degree $3$ Sherali-Adams lift of the problem, with their defining inequalities. 

For the objective, we consider variables:
\begin{enumerate}
\item $w^2$ corresponding to terms of type $w(q_{i+1},q_j)\cdot w(q_i,q_{j+1})$,
\item $Rw$ corresponding to terms of type $R(q_\ell) \cdot w(q_{i+1},q_j)$ or $R(q_\ell) \cdot w(q_i,q_{j+1})$, and
\item $Rw^2$ corresponding to terms of type $R(q_\ell) \cdot w(q_{i+1},q_j) \cdot w(q_i,q_{j+1})$.
\end{enumerate}
For $w^2$, the Sherali-Adams inequalities are then:
\begin{align}
\forall 1 \leq j < i \leq n, \quad\quad\quad\quad & \label{con:w2Low} \\
-w(q_{i+1},q_j) + w^2(i,j) & \leq 0 \nonumber \\
-w(q_i,q_{j+1}) +w^2(i,j) & \leq 0 \nonumber \\
w(q_{i+1},q_j) + w(q_i,q_{j+1}) - w^2(i,j) & \leq 1 \nonumber
\end{align}
In turn, for $Rw$, the Sherali-Adams inequalities are given:
\begin{align}
\forall 1 \leq j < i \leq n, \forall \ell \in \{i,i+1,j,j+1\}, \forall (s,t) \in \{(&i+1,j),(i,j+1)\}, \label{con:RwLow} \\ 
-R(q_\ell) + Rw(\ell,s,t) & \leq 0 \nonumber \\
-w(q_s,q_t) + Rw(\ell,s,t) & \leq 0 \nonumber \\
R(q_\ell) + w(q_s,q_t) - Rw(\ell,s,t) & \leq 1 \nonumber 
\end{align}
Finally, we have the Sherali-Adams inequalities for $Rw^2$:
\begin{align}
\forall 1 \leq j < i \leq n, \forall \ell \in \{i,i+1,j,j+1\}, \quad\quad\quad\quad & \label{con:Rw2Low} \\ 
-w^2(i,j) + Rw^2(\ell,i,j) & \leq 0 \nonumber \\
-Rw(\ell,i+1,j) + Rw^2(\ell,i,j) & \leq 0 \nonumber \\
-Rw(\ell,i,j+1) + Rw^2(\ell,i,j) & \leq 0 \nonumber \\
-w(q_{i+1},q_j) + w^2(i,j) + Rw(\ell,i+1,j) - Rw^2(\ell,i,j) & \leq 0 \nonumber \\
-w(q_i,q_{j+1}) + w^2(i,j) + Rw(\ell,i,j+1) - Rw^2(\ell,i,j) & \leq 0 \nonumber \\
-R(q_\ell) + Rw(\ell,i+1,j) + Rw(\ell,i,j+1) - Rw^2(\ell,i,j) & \leq 0 \nonumber \\
R(q_\ell) + w(q_{i+1},q_j) + w(q_i,q_{j+1}) - Rw(\ell,i+1,j) - Rw(\ell,i,j+1) - w^2(i,j) + Rw^2(\ell,i,j) & \leq 1 \nonumber 
\end{align}

For the defining constraints for $w$, (\ref{con:wx}) and (\ref{con:wy}), we linearize terms of the 
form $R(q_\ell) \cdot w(\bar{q}_i,\bar{q}_j)$ to $Rw(\ell,i,j)$, coinciding with the previously defined $Rw$ term 
whenever necessary. These terms have defining inequalities:
\begin{align}
\forall 1 \leq j < i \leq n, \forall \ell \in \{i,j\}, \quad\quad \label{con:RwBis} \\ 
-R(q_\ell) + Rw(\ell,i,j) & \leq 0 \nonumber \\
-w(q_i,q_j) + Rw(\ell,i,j) & \leq 0 \nonumber \\
R(q_\ell) + w(q_i,q_j) - Rw(\ell,i,j) & \leq 1 \nonumber 
\end{align}

Finally, we again impose the monotonicity constraints for $w$, despite their redundancy. This implies that our lower 
bounding MILP has the following form:
\begin{align}
\min_{k \in \{1,2,...,N\}} \min_{\vec{R},\vec{w},w^2,Rw,Rw^2} & \quad \sum_{1 \leq i \leq n} A_k(i,i) \cdot f_{ii}(\vec{R},\vec{w}) + 2 \cdot \sum_{1 \leq j < i \leq n} A_k(i,j) \cdot f_{ij}(\vec{R},\vec{w})  \label{opt:lower} \\
\text{subject to} &  \quad (\ref{con:Rconcave}),(\ref{con:Ropt1}),(\ref{con:Ropt2}),(\ref{con:Ropt3}),(\ref{con:wx}),(\ref{con:wy}),(\ref{con:w2Low}),(\ref{con:RwLow}),(\ref{con:Rw2Low}),(\ref{con:RwBis}) \nonumber \\
& \quad w(\bar{q}_i,\bar{q}_j) \leq w(\bar{q}_{i+1},\bar{q}_j) \ \forall \ 1 \leq j < i < n \nonumber \\
& \quad w(\bar{q}_i,\bar{q}_j) \geq w(\bar{q}_i,\bar{q}_{j+1}) \ \forall \ 1 \leq j < i+1 \leq n \nonumber \\
& \quad w(q_i,q_i) = 0 \ \forall \ 1 \leq i \leq n \nonumber \\
& \quad \vec{R} \in [0,1]^{n+1} \nonumber \\ 
& \quad \vec{w} \in \{0,1\}^{\binom{n+1}{2}} \nonumber \\
& \quad w^2, Rw, Rw^2 \geq 0 \nonumber
\end{align}

\subsection{Technical Notes on Performance}\label{sec:perf}

For computing an upper bound by searching approximate worst-case distributions and for computing a lower bound, we 
solve a number of mixed-integer linear programs. As we either approximate or lower bound a Riemann integral, more refined 
gauges provide better guarantees. On the other hand, mixed-integer linear programming is of course NP-hard in general, and for a 
gauge with approximately $\geq 45$ intervals, we have $\geq 1000$ binary variables. Therefore, we must select our gauge 
while considering whether the family of MILPs is practically solvable.

For both programs (\ref{opt:approx2}) and (\ref{opt:lower}), we enforce inequalities corresponding to the monotonicity property of $w$ implied 
by Lemma~\ref{lem:wMon}. Even though such inequalities are redundant, we keep in mind that MILP solvers have to \emph{derive} 
valid inequalities, and in general may not be able to deduce families of inequalities implied by the specific properties of the problem. 
Indeed, the inclusion of monotonicity inequalities for $w$ results in considerable speed-up for the programs (\ref{opt:approx2}) and (\ref{opt:lower}).

Also, for a fixed number $n$ of intervals under consideration, (\ref{opt:approx2}) immediately provides a family of gauges. On the other 
hand, after selecting the number $N$ of optimal intervals, we still need to select gauges for (\ref{opt:lower}) for each $k \in \{1,2,...,N\}$. 
An immediate candidate is the ``approximately uniform'' gauge. For such a gauge, when $k = 1$, we divide $[1/N,1]$ into $n-1$ equal size 
intervals. Likewise, when $k = N$, we divide $[0,1-1/N]$ into $n-1$ equal size intervals. If instead $1 < k < N$, we choose $m$ such that:
$$ m \in \arg \min_{1 < \mu < n-1} \left| \frac{k-1}{N \cdot \mu} - \frac{N-k}{N \cdot (n-\mu-1)} \right| $$
We then divide $[0,(k-1)/N]$ into $m$ equal size intervals, and $[k/N,1]$ into $n-m-1$ equal size intervals. 

\begin{figure}[h]
\begin{centering}
\input{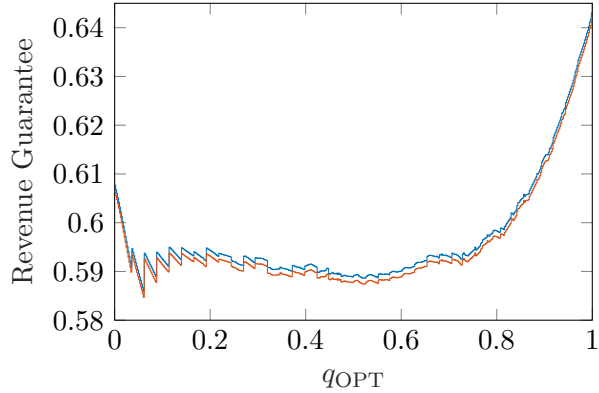}
\caption{Computation of (\ref{opt:lower}) with the approximately uniform gauge for $n = 40$ and $N = 500$. 
Values of solver output are shown in blue, while the lower bounds they imply after correcting for the tolerance of 
the solver are shown in orange.  
This computation implies a lower bound of $.5847$ via the lower bound for revenue when $q_\text{OPT} \in [.06,.062]$.}
\label{fig:n40}
\end{centering}
\end{figure}

While straightforward, this choice of gauge is problematic. To illustrate the issue, a computation of (\ref{opt:lower}) with the approximately uniform gauge 
for $n = 40$ and $N = 500$ at $99.8\%$ relative tolerance is shown in Figure \ref{fig:n40}. 
As seen in the figure, the approximately uniform gauge results in ``jagged'' behaviour for derived lower bounds when the revenue 
curve attains its maximum on (approximately) $[0,.2]$. The upwards kinks occur roughly when $k \rightarrow k+1$ causes $m \rightarrow m+1$. 
This implies that, for some initial segment of $[0,1]$, the quality of our lower bounds improve when we add more intervals in the segment $[0,(k-1)/N]$. 
As evidenced by Figure \ref{fig:n40}, somehow smoothing the jagged behaviour of the lower bound curve would allow us to improve 
our lower bound on the revenue of the ERM mechanism.

So we consider a modification of the approximately uniform gauge, \emph{square weighing} the gauge on $[0,1/2]$. In particular, 
for $k < N/2$, we instead choose $m$ such that
$$ m \in \arg \min_{1 < \mu < n-1} \left| \frac{k-1}{N \cdot \mu^2} - \frac{N-k}{N \cdot (n-\mu-1)^2} \right|.$$

Unfortunately, using this square-weighted gauge results in considerable slowdown of computations, when $k \lesssim N/10$. For 
this reason, we lower the relative efficiency guarantees of our solver when $k \leq N/10$. This results in a jump ``discontinuity'' in our 
computed revenue \emph{guarantees}, but this effect is not strong enough to overpower improvements on our final lower bound 
due to choice of gauge.

\begin{figure}[h]
\begin{centering}
\input{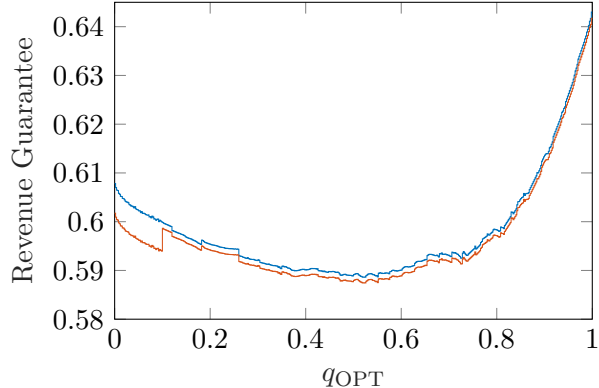}
\caption{Computation of (\ref{opt:lower}) with the square weighted gauge for $n = 40$ and $N = 500$, presented 
as in Figure \ref{fig:n40}. The square weighted gauge smoothes our lower bound estimates on $[0,1/2]$, providing a greater lower bound of $.5874$.}
\label{fig:gaugeDiff}
\end{centering}
\end{figure}

For a comparison of the quality of lower bounds provided by the approximately uniform gauge and the (partially) square-weighted 
gauge, we show in Figure \ref{fig:gaugeDiff} a computation of (\ref{opt:lower}) with the square-weighted gauge 
for $n = 40$ and $N = 500$, running our solver at $99.8\%$ relative tolerance for $k > 50$ and at $99\%$ relative tolerance 
for $k \leq 50$. As seen, the derived lower bounds are smoothed on the initial 
segment of $[0,1]$ by the weighing, and despite the jump in the lower bounds due to change in tolerance at $k = 50$, 
the quality of the lower bounds we obtain increase. The reason for why such a weighing 
works is unknown to us; indeed, we found the square-weighing rule by trial-and-error.

\section{Results: Lower and Upper Bounds}\label{sec:res}

We are now ready to present lower and upper bounds on the performance of the ERM mechanism with two samples.
We compute (\ref{opt:approx2}) and (\ref{opt:lower}) using MATLAB + CPLEX as our MILP solver of choice\footnote{Our code is available at \url{https://meteahunbay.com/files/code-twoSampleMILP.zip}.}. We compute~(\ref{opt:approx2}) for $n = 80$, obtaining an approximate \emph{conditional\footnote{On $\arg \max_{q\in [0,1]} R(q)$.} minimum expected revenue curve}. Each 
computation for $k \in \{1,...,81\}$ also provides us with an approximately minimal distribution; given primal solution $(R,w,Rw)$ 
to (\ref{opt:approx2}) for $k$, we consider the minimum concave function $R_k$ such that $R_k(j/n) = R(j/n)$ for any $j = \{0,1,...,80\}$. 
By numerically evaluating the integral (\ref{eqn:qSpace}) in Mathematica for each such $R_k$, we obtain upper bounds on the 
performance of the ERM mechanism.

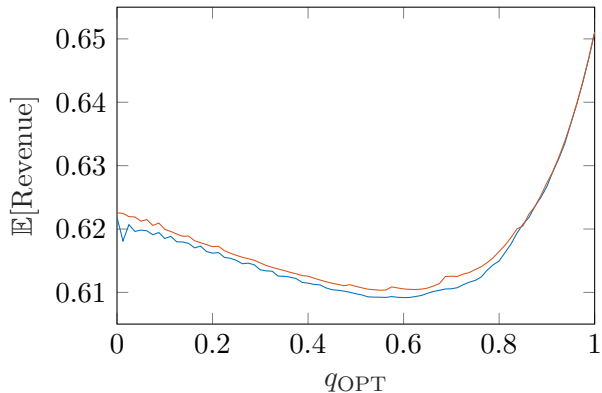
\begin{figure}[H]
\begin{centering}
%
%
\definecolor{mycolor1}{rgb}{0.00000,0.44700,0.74100}%
\definecolor{mycolor2}{rgb}{0.85000,0.32500,0.09800}%
\begin{tikzpicture}

\begin{axis}[%
width=2.5in,
height=1.66in,
at={(0.758in,0.481in)},
scale only axis,
xmin=0,
xmax=1,
xlabel style={font=\color{white!15!black}},
xlabel={$q_{\text{OPT}}$},
ymin=0.605,
ymax=0.655,
ytick = {.61,.62,.63,.64,.65},
ylabel style={font=\color{white!15!black}},
ylabel={$\mathbb{E}[\text{Revenue}]$},
axis background/.style={fill=white}
]
\addplot [color=mycolor1]
  table[row sep=crcr]{%
0	0.621847597996752\\
0.0125	0.618128909867634\\
0.025	0.620690102817644\\
0.0375	0.619613958270099\\
0.05	0.619831820775858\\
0.0625	0.619718700140898\\
0.075	0.619103054103899\\
0.0875	0.619445319983182\\
0.1	0.618501626232825\\
0.1125	0.618841183316504\\
0.125	0.617990542038804\\
0.1375	0.617954072351711\\
0.15	0.617732631840794\\
0.1625	0.617021393380992\\
0.175	0.617308693138422\\
0.1875	0.616437767741248\\
0.2	0.616191381094405\\
0.2125	0.616284152431608\\
0.225	0.615521093124485\\
0.2375	0.615376032196026\\
0.25	0.615049930120273\\
0.2625	0.614521765820342\\
0.275	0.614615267318794\\
0.2875	0.614354198597016\\
0.3	0.613558917073344\\
0.3125	0.613399844205201\\
0.325	0.613337215018357\\
0.3375	0.612562196327838\\
0.35	0.612547205613761\\
0.3625	0.612404096884759\\
0.375	0.612180119055477\\
0.3875	0.611585699917948\\
0.4	0.611458562700086\\
0.4125	0.611266346354025\\
0.425	0.611185227168251\\
0.4375	0.610715446837248\\
0.45	0.610423574782593\\
0.4625	0.610342104189977\\
0.475	0.610206047586022\\
0.4875	0.610024335049522\\
0.5	0.609773264821277\\
0.5125	0.609609402181806\\
0.525	0.60930638201951\\
0.5375	0.60925098816344\\
0.55	0.609244252833807\\
0.5625	0.609205072708021\\
0.575	0.609344678031428\\
0.5875	0.60922474676779\\
0.6	0.609180986376231\\
0.6125	0.609198184287442\\
0.625	0.609349280149051\\
0.6375	0.609518739664064\\
0.65	0.609856098526185\\
0.6625	0.610141417031262\\
0.675	0.610327009704171\\
0.6875	0.610541002743962\\
0.7	0.610570092170196\\
0.7125	0.610741159480187\\
0.725	0.611220290826087\\
0.7375	0.611610550568237\\
0.75	0.611897848655076\\
0.7625	0.612427984294983\\
0.775	0.613506911398849\\
0.7875	0.614345647836997\\
0.8	0.614959146687186\\
0.8125	0.616295375224263\\
0.825	0.617644545631548\\
0.8375	0.619371891847199\\
0.85	0.620761396332245\\
0.8625	0.621820652173912\\
0.875	0.623594282834101\\
0.8875	0.624997065727696\\
0.9	0.626702586206893\\
0.9125	0.629047303082188\\
0.925	0.631107826576574\\
0.9375	0.633470833333333\\
0.95	0.636612331610276\\
0.9625	0.639729437229436\\
0.975	0.643088942307693\\
0.9875	0.646843354430375\\
1	0.6510234375\\
};

\addplot [color=mycolor2]
  table[row sep=crcr]{%
0	0.622555852974572\\
0.0125	0.622476777780105\\
0.025	0.621949407129057\\
0.0375	0.621905021778558\\
0.05	0.621241575011267\\
0.0625	0.621482143961759\\
0.075	0.620560170720463\\
0.0875	0.620945626035401\\
0.1	0.61995313503559\\
0.1125	0.619620192733837\\
0.125	0.619211771435211\\
0.1375	0.618860459478917\\
0.15	0.618888811244881\\
0.1625	0.618144476264573\\
0.175	0.617845860648914\\
0.1875	0.617531913612085\\
0.2	0.617216237467852\\
0.2125	0.617284569524958\\
0.225	0.616575771579944\\
0.2375	0.616181239697985\\
0.25	0.615854323208065\\
0.2625	0.615539635080567\\
0.275	0.615313150454835\\
0.2875	0.615069088519002\\
0.3	0.614647265916255\\
0.3125	0.614243636227304\\
0.325	0.613946736709944\\
0.3375	0.613694275293204\\
0.35	0.613457478770962\\
0.3625	0.613168296794351\\
0.375	0.612938280692933\\
0.3875	0.612662930949663\\
0.4	0.612555534043416\\
0.4125	0.612255580706186\\
0.425	0.61195379176307\\
0.4375	0.611688112522375\\
0.45	0.611446868041451\\
0.4625	0.611259328021871\\
0.475	0.611063478507371\\
0.4875	0.611217049372912\\
0.5	0.610955121823257\\
0.5125	0.610730310489126\\
0.525	0.610507967355456\\
0.5375	0.610429503700358\\
0.55	0.610349854453496\\
0.5625	0.610373124596528\\
0.575	0.610862820772694\\
0.5875	0.610700059872488\\
0.6	0.610546480134837\\
0.6125	0.610482563923372\\
0.625	0.610456715719923\\
0.6375	0.610519931353598\\
0.65	0.610681311710439\\
0.6625	0.611013161557775\\
0.675	0.611384077729827\\
0.6875	0.612513007837986\\
0.7	0.612552465603761\\
0.7125	0.612510391046819\\
0.725	0.612886009595031\\
0.7375	0.613133217992967\\
0.75	0.61361436292613\\
0.7625	0.614048733994564\\
0.775	0.614702451816799\\
0.7875	0.615466521662223\\
0.8	0.61646041984118\\
0.8125	0.617454878858379\\
0.825	0.6186636084069\\
0.8375	0.620036404912223\\
0.85	0.620487995063486\\
0.8625	0.622359465102598\\
0.875	0.623551292373271\\
0.8875	0.625302672270561\\
0.9	0.627285476943963\\
0.9125	0.62905937331027\\
0.925	0.631340467068702\\
0.9375	0.633847287730874\\
0.95	0.636667650290906\\
0.9625	0.63972047472825\\
0.975	0.643163389556013\\
0.9875	0.646928493459534\\
1	0.651103485328401\\
};

\end{axis}
\end{tikzpicture}%
\caption{Results of computation of (\ref{opt:approx2}) for $n = 80$. The values of (\ref{opt:approx2}) conditional on $q_{OPT}$ is shown in 
blue, while upper bounds on the performance of the ERM mechanism obtained from primal solutions are shown in orange.}
\label{fig:n80UB}
\end{centering}
\end{figure}

The results of this computation is shown in Figure \ref{fig:n80UB}. Numerically computing the integral 
\ref{eqn:qSpace} for each primal solution we obtain, the internal error estimates provided by Mathematica are $\leq 10^{-6}$ for 
each integral approximation. Via these computations, we find that our primal solution for $n = 80$ 
and $q_\text{OPT} = 44/80$ provides a regular revenue curve for which the ERM mechanism obtains  $\leq .61035$ times the 
optimal revenue. Furthermore, we are able to inspect the form of minimal distributions themselves. Figure \ref{fig:n80ex} shows the form 
of such distributions -- it appears they are closely approximated by piecewise linear functions on at most three intervals ($3$-piecewise 
linear functions).

\begin{figure}[H]
\centering
\begin{minipage}{.49\textwidth}
  \centering
%
%
\definecolor{mycolor1}{rgb}{0.00000,0.44700,0.74100}%
\begin{tikzpicture}

\begin{axis}[%
width=2in,
height=1.33in,
at={(0.594in,0.28in)},
scale only axis,
xmin=0,
xmax=1,
xlabel style={font=\color{white!15!black}},
xlabel={$q$},
ymin=0,
ymax=1,
ylabel style={font=\color{white!15!black}},
ylabel={$R(q)$},
axis background/.style={fill=white}
]
\addplot [color=mycolor1]
  table[row sep=crcr]{%
0	0\\
0.0125	0.0714285714285713\\
0.025	0.142857142857143\\
0.0375	0.214285714285714\\
0.05	0.285714285714286\\
0.0625	0.357142857142857\\
0.075	0.428571428571429\\
0.0875	0.5\\
0.1	0.571428571428572\\
0.1125	0.642857142857143\\
0.125	0.714285714285715\\
0.1375	0.785714285714286\\
0.15	0.857142857142857\\
0.1625	0.928571428571429\\
0.175	1\\
0.1875	0.986916172219686\\
0.2	0.973832344439372\\
0.2125	0.960748516659059\\
0.225	0.947664688878745\\
0.2375	0.934580861098431\\
0.25	0.921497033318117\\
0.2625	0.908413205537804\\
0.275	0.89532937775749\\
0.2875	0.882245549977176\\
0.3	0.869161722196862\\
0.3125	0.856077894416549\\
0.325	0.842994066636235\\
0.3375	0.829910238855921\\
0.35	0.816826411075607\\
0.3625	0.803742583295294\\
0.375	0.79065875551498\\
0.3875	0.777574927734666\\
0.4	0.764491099954352\\
0.4125	0.751407272174038\\
0.425	0.738323444393725\\
0.4375	0.725239616613411\\
0.45	0.712155788833098\\
0.4625	0.699071961052784\\
0.475	0.685988133272471\\
0.4875	0.672904305492158\\
0.5	0.659820477711844\\
0.5125	0.646736649931531\\
0.525	0.633652822151217\\
0.5375	0.620568994370904\\
0.55	0.60748516659059\\
0.5625	0.594401338810277\\
0.575	0.581317511029964\\
0.5875	0.568233683249651\\
0.6	0.555149855469337\\
0.6125	0.542066027689024\\
0.625	0.52898219990871\\
0.6375	0.515898372128397\\
0.65	0.502814544348084\\
0.6625	0.489730716567771\\
0.675	0.476646888787457\\
0.6875	0.463563061007144\\
0.7	0.450479233226831\\
0.7125	0.437395405446518\\
0.725	0.424311577666204\\
0.7375	0.411227749885891\\
0.75	0.398143922105578\\
0.7625	0.385060094325265\\
0.775	0.371976266544952\\
0.7875	0.358892438764638\\
0.8	0.345808610984325\\
0.8125	0.332724783204012\\
0.825	0.319640955423699\\
0.8375	0.306557127643386\\
0.85	0.293473299863072\\
0.8625	0.280389472082759\\
0.875	0.267305644302446\\
0.8875	0.254221816522133\\
0.9	0.241137988741819\\
0.9125	0.228054160961506\\
0.925	0.214970333181193\\
0.9375	0.20063056151697\\
0.95	0.167674637433366\\
0.9625	0.131289444278315\\
0.975	0.0936642586967572\\
0.9875	0.0560390731151989\\
1	0.00651988017857788\\
};

\end{axis}
\end{tikzpicture}%
\end{minipage}
\begin{minipage}{.49\textwidth}
  \centering
%
%
\definecolor{mycolor1}{rgb}{0.00000,0.44700,0.74100}%
\begin{tikzpicture}

\begin{axis}[%
width=2in,
height=1.33in,
at={(0.65in,0.343in)},
scale only axis,
xmin=0,
xmax=1,
xlabel style={font=\color{white!15!black}},
xlabel={$q$},
ymin=0,
ymax=1,
ylabel style={font=\color{white!15!black}},
ylabel={$R(q)$},
axis background/.style={fill=white}
]
\addplot [color=mycolor1]
  table[row sep=crcr]{%
0	0\\
0.0125	0.0227272727272729\\
0.025	0.0454545454545454\\
0.0375	0.0681818181818179\\
0.05	0.0909090909090904\\
0.0625	0.113636363636363\\
0.075	0.136363636363635\\
0.0875	0.159090909090908\\
0.1	0.18181818181818\\
0.1125	0.204545454545453\\
0.125	0.227272727272726\\
0.1375	0.249999999999998\\
0.15	0.272727272727271\\
0.1625	0.295454545454543\\
0.175	0.318181818181816\\
0.1875	0.340909090909088\\
0.2	0.363636363636361\\
0.2125	0.386363636363633\\
0.225	0.409090909090906\\
0.2375	0.431818181818178\\
0.25	0.454545454545451\\
0.2625	0.477272727272724\\
0.275	0.499999999999996\\
0.2875	0.522727272727269\\
0.3	0.545454545454542\\
0.3125	0.568181818181815\\
0.325	0.590909090909087\\
0.3375	0.61363636363636\\
0.35	0.636363636363633\\
0.3625	0.659090909090906\\
0.375	0.681818181818179\\
0.3875	0.704545454545451\\
0.4	0.727272727272724\\
0.4125	0.749999999999997\\
0.425	0.77272727272727\\
0.4375	0.795454545454543\\
0.45	0.818181818181816\\
0.4625	0.840909090909089\\
0.475	0.863636363636362\\
0.4875	0.886363636363635\\
0.5	0.909090909090908\\
0.5125	0.931818181818181\\
0.525	0.954545454545454\\
0.5375	0.977272727272727\\
0.55	1\\
0.5625	0.974594831362242\\
0.575	0.949189662724485\\
0.5875	0.923784494086727\\
0.6	0.898379325448969\\
0.6125	0.872974156811212\\
0.625	0.847568988173454\\
0.6375	0.822163819535697\\
0.65	0.796758650897939\\
0.6625	0.771353482260182\\
0.675	0.745948313622424\\
0.6875	0.720543144984667\\
0.7	0.69513797634691\\
0.7125	0.669732807709152\\
0.725	0.644327639071395\\
0.7375	0.618922470433638\\
0.75	0.59351730179588\\
0.7625	0.568112133158123\\
0.775	0.542706964520366\\
0.7875	0.517301795882609\\
0.8	0.491896627244851\\
0.8125	0.466491458607094\\
0.825	0.441086289969337\\
0.8375	0.41568112133158\\
0.85	0.390275952693823\\
0.8625	0.364870784056065\\
0.875	0.339465615418308\\
0.8875	0.314060446780551\\
0.9	0.288655278142794\\
0.9125	0.263250109505036\\
0.925	0.237844940867279\\
0.9375	0.212439772229522\\
0.95	0.187034603591765\\
0.9625	0.161629434954008\\
0.975	0.13622426631625\\
0.9875	0.0764989307155183\\
1	0.000592615495607213\\
};

\end{axis}
\end{tikzpicture}%
\end{minipage}
\caption{Revenue curves of approximately minimal distributions (conditional on $q_{OPT}$) obtained 
from (\ref{opt:approx2}) for $n = 80$ and $k = 15,45$.}
\label{fig:n80ex}
\end{figure}
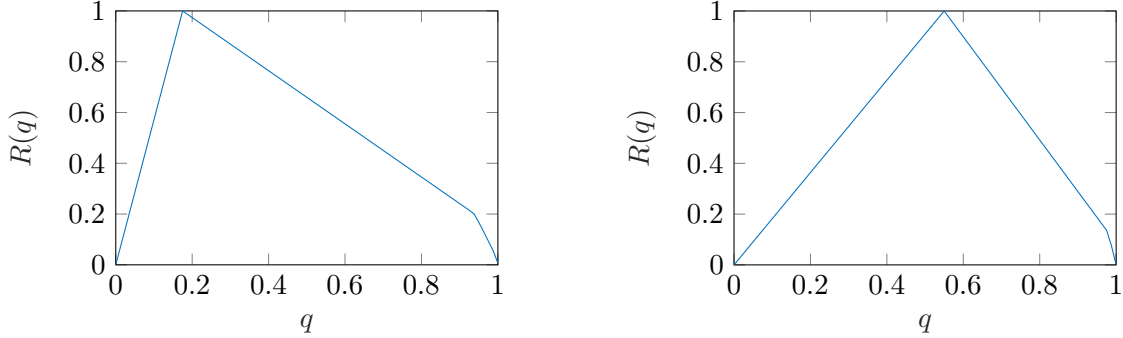

This motivates us to inspect the efficiency of the ERM mechanism for piecewise linear functions on $\leq$ three intervals. 
Note that any such function $R$, normalised such that $\max_{q \in [0,1]} R(q) = 1$, is the minimum concave function 
that contains $(0,0),(q_{OPT},1),(q_2,r_2),(1,r_3)$ in its hypograph for some $q_{OPT},q_2,r_2,r_3 \in [0,1]$. Given 
such $q_{OPT},q_2,r_2,r_3$, we denote by $R(\cdot|q_{OPT},q_2,r_2,r_3)$ the minimum concave function whose 
hypograph contains $(0,0),(q_{OPT},1),(q_2,r_2),(1,r_3)$.

Given $q_{OPT}$, the minimum performance of the ERM mechanism for functions of the form $R(\cdot|q_{OPT},q_2,r_2,r_3)$ is given by:
\begin{equation}\label{eqn:eta}
\eta(q_{OPT}) = \min_{q_2,r_2,r_3} \int_{(x,y) \in [0,1]^2} \phi_{R(\cdot|q_{OPT},q_2,r_2,r_3)}(x,y) \cdot d(x,y)
\end{equation}

For $q_{OPT} \in \{0,1/80,2/80,...,1\}$ we attempt to approximate $\eta(q_{OPT})$ by grid search, evaluating the 
minimum in (\ref{eqn:eta}) for
\begin{align*}
q_2 & = k/80 &  71 \leq & \ k \leq 80\\
r_2 & = k/1000 &   0 \leq & \ k \leq 300\\
r_3 & = k/1000 &  0 \leq & \ k \leq 18, k \in \mathbb{Z}.
\end{align*}
Note that we do not grid the entire cube $[0,1]^3$. We instead choose to restrict the bound of our grid search by the 
form of our primal solutions for (\ref{opt:approx2}). This is because we have found that the optimization problem 
for $\eta(q_{OPT})$ is very poorly behaved. In particular, solutions obtained by searching over a coarse grid and then 
improving via gradient descent have provided greater (by $\gtrsim .01$) upper bounds than simply searching on a fine grid. 
In addition, to keep computation costs low we have opted to restrict our search space.

\begin{figure}[H]
\begin{centering}
%
%
\definecolor{mycolor1}{rgb}{0.00000,0.44700,0.74100}%
\definecolor{mycolor2}{rgb}{0.85000,0.32500,0.09800}%
\begin{tikzpicture}

\begin{axis}[%
width=2.5in,
height=1.66in,
at={(0.758in,0.481in)},
scale only axis,
xmin=0,
xmax=1,
xlabel style={font=\color{white!15!black}},
xlabel={$q_{\text{OPT}}$},
ymin=0.605,
ymax=0.655,
ytick = {.61,.62,.63,.64,.65},
ylabel style={font=\color{white!15!black}},
ylabel={$\mathbb{E}[\text{Revenue}]$},
axis background/.style={fill=white}
]
\addplot [color=mycolor1]
  table[row sep=crcr]{%
0	0.622555852974572\\
0.0125	0.622476777780105\\
0.025	0.621949407129057\\
0.0375	0.621905021778558\\
0.05	0.621241575011267\\
0.0625	0.621482143961759\\
0.075	0.620560170720463\\
0.0875	0.620945626035401\\
0.1	0.61995313503559\\
0.1125	0.619620192733837\\
0.125	0.619211771435211\\
0.1375	0.618860459478917\\
0.15	0.618888811244881\\
0.1625	0.618144476264573\\
0.175	0.617845860648914\\
0.1875	0.617531913612085\\
0.2	0.617216237467852\\
0.2125	0.617284569524958\\
0.225	0.616575771579944\\
0.2375	0.616181239697985\\
0.25	0.615854323208065\\
0.2625	0.615539635080567\\
0.275	0.615313150454835\\
0.2875	0.615069088519002\\
0.3	0.614647265916255\\
0.3125	0.614243636227304\\
0.325	0.613946736709944\\
0.3375	0.613694275293204\\
0.35	0.613457478770962\\
0.3625	0.613168296794351\\
0.375	0.612938280692933\\
0.3875	0.612662930949663\\
0.4	0.612555534043416\\
0.4125	0.612255580706186\\
0.425	0.61195379176307\\
0.4375	0.611688112522375\\
0.45	0.611446868041451\\
0.4625	0.611259328021871\\
0.475	0.611063478507371\\
0.4875	0.611217049372912\\
0.5	0.610955121823257\\
0.5125	0.610730310489126\\
0.525	0.610507967355456\\
0.5375	0.610429503700358\\
0.55	0.610349854453496\\
0.5625	0.610373124596528\\
0.575	0.610862820772694\\
0.5875	0.610700059872488\\
0.6	0.610546480134837\\
0.6125	0.610482563923372\\
0.625	0.610456715719923\\
0.6375	0.610519931353598\\
0.65	0.610681311710439\\
0.6625	0.611013161557775\\
0.675	0.611384077729827\\
0.6875	0.612513007837986\\
0.7	0.612552465603761\\
0.7125	0.612510391046819\\
0.725	0.612886009595031\\
0.7375	0.613133217992967\\
0.75	0.61361436292613\\
0.7625	0.614048733994564\\
0.775	0.614702451816799\\
0.7875	0.615466521662223\\
0.8	0.61646041984118\\
0.8125	0.617454878858379\\
0.825	0.6186636084069\\
0.8375	0.620036404912223\\
0.85	0.620487995063486\\
0.8625	0.622359465102598\\
0.875	0.623551292373271\\
0.8875	0.625302672270561\\
0.9	0.627285476943963\\
0.9125	0.62905937331027\\
0.925	0.631340467068702\\
0.9375	0.633847287730874\\
0.95	0.636667650290906\\
0.9625	0.63972047472825\\
0.975	0.643163389556013\\
0.9875	0.646928493459534\\
1	0.651103485328401\\
};

\addplot [color=mycolor2]
  table[row sep=crcr]{%
0	0.623034543737331\\
0.0125	0.623404027951976\\
0.025	0.622990091573464\\
0.0375	0.622592053095435\\
0.05	0.622265499188368\\
0.0625	0.62183106278011\\
0.075	0.621414221907146\\
0.0875	0.621087193671902\\
0.1	0.620726935462941\\
0.1125	0.620404850459576\\
0.125	0.62005023290497\\
0.1375	0.619640143352783\\
0.15	0.619226968759307\\
0.1625	0.618835076828988\\
0.175	0.618439710124442\\
0.1875	0.618070842501744\\
0.2	0.617707506708158\\
0.2125	0.617383478087663\\
0.225	0.617005028246991\\
0.2375	0.616670031887117\\
0.25	0.61636579174184\\
0.2625	0.616057143230771\\
0.275	0.615696652794261\\
0.2875	0.615329923241252\\
0.3	0.614960924298566\\
0.3125	0.614625124040463\\
0.325	0.614288622981592\\
0.3375	0.613997075258352\\
0.35	0.613664658031047\\
0.3625	0.613385766195224\\
0.375	0.613129545422165\\
0.3875	0.61295642022825\\
0.4	0.612653909445719\\
0.4125	0.612452594552127\\
0.425	0.612168678566319\\
0.4375	0.611895738071737\\
0.45	0.611640115236311\\
0.4625	0.611409607438679\\
0.475	0.611200111051094\\
0.4875	0.611005510421505\\
0.5	0.610854254378486\\
0.5125	0.610715792117957\\
0.525	0.610622539107796\\
0.5375	0.610576497887965\\
0.55	0.610543513548706\\
0.5625	0.61055689107628\\
0.575	0.61046147651954\\
0.5875	0.610406961117867\\
0.6	0.61038160948084\\
0.6125	0.610386480506424\\
0.625	0.610427659908124\\
0.6375	0.610505835395357\\
0.65	0.61063233000858\\
0.6625	0.610802218028318\\
0.675	0.611036334901654\\
0.6875	0.611328372706209\\
0.7	0.611702155020256\\
0.7125	0.612131209014087\\
0.725	0.612603409595528\\
0.7375	0.613104197023358\\
0.75	0.613745898403739\\
0.7625	0.614365277110541\\
0.775	0.614893537108126\\
0.7875	0.615521420900841\\
0.8	0.616268884254195\\
0.8125	0.617129064082967\\
0.825	0.618123701940794\\
0.8375	0.619234097347865\\
0.85	0.620475148707332\\
0.8625	0.6218591150259\\
0.875	0.623490655630533\\
0.8875	0.62512961064511\\
0.9	0.626980642617837\\
0.9125	0.629032118949498\\
0.925	0.631320840695515\\
0.9375	0.633846584339575\\
0.95	0.636647253340008\\
0.9625	0.639720474728256\\
0.975	0.643163389557241\\
0.9875	0.646928493459549\\
1	0.651103485328408\\
};

\end{axis}
\end{tikzpicture}%
\caption{Comparison of derived upper bounds conditional on $q_{OPT}$. The expected revenue of the ERM mechanism 
for distributions obtained from primal solutions of (\ref{opt:approx2}) are shown in blue, while the expected revenue of the 
ERM mechanism for minimal distributions obtained from a grid-search approximation of $\eta(q_{OPT})$ are shown in orange.} 
\label{fig:comp}
\end{centering}
\end{figure}
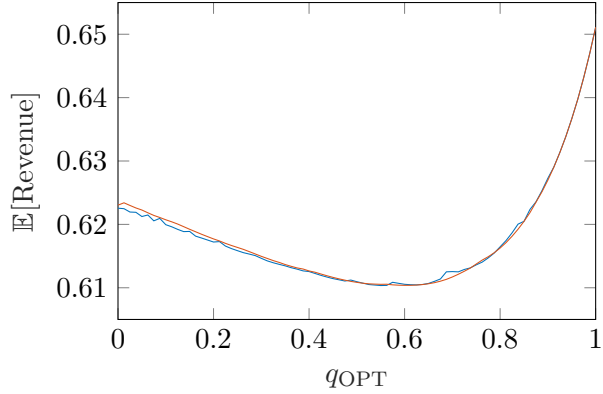

The results of our grid search are shown in Figure \ref{fig:comp}, juxtaposed with the upper bounds we obtain via 
solutions of (\ref{opt:approx2}). It appears that both curves approximate some convex shape on $[0,1]$. Since we do not 
have any approximation guarantees from our grid search, we are unable to tell if $\eta(\cdot)$ can or cannot provide stronger 
lower bounds than those derived via our primal solutions to (\ref{opt:approx2}) for any $q_{OPT}$ in general, but values we 
obtain via grid search do not improve on our upper bound of $.61035$.

\begin{figure}[H]
\begin{centering}
\input{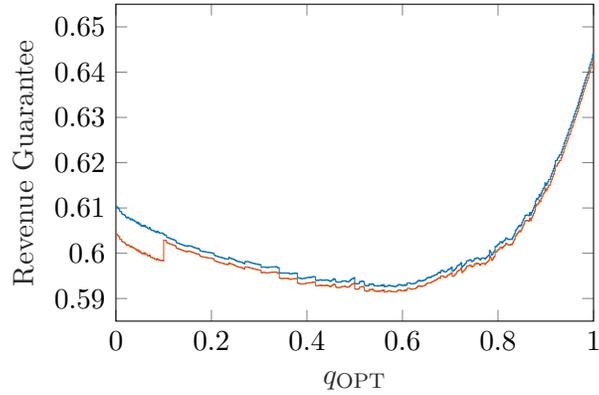}
\caption{Results of computation of (\ref{opt:lower}) for $n = 50$ and $N = 500$. The blue line shows the value of the 
objective for primal solutions found, while the orange line shows conditional lower bounds on the expected revenue of the 
auction. These results are corrected for tolerance of the MILP solver.}
\label{fig:n50LB}
\end{centering}
\end{figure}

Finally, we compute (\ref{opt:lower}) for $n = 50$ and $N = 500$, running our solver at $99.8\%$ relative tolerance for $k > 50$ 
and $99\%$ relative tolerance for $k \leq 50$.  The results of the computation are shown in Figure \ref{fig:n50LB}. Our results 
show that the ERM mechanism guarantees an expected revenue $\geq .5914$ times the optimal revenue.

\section{Conclusion}\label{sec:conc}

In this paper, we presented an MILP formulation to inspect the expected revenue of the ERM mechanism in the single item, 
single buyer, two sample setting. Working within this formulation has allowed us to greatly improve upon the known upper and 
lower bounds of the expected revenue guarantees of the ERM mechanism with two samples, and provided us with insights on 
what minimum revenue distributions may look like.

Despite the sheer number of binary variables involved, computations to certify our bounds were relatively \emph{cheap} -- on 
a ASUS ROG Zephyrus M (GU502GV) laptop, (\ref{opt:approx2}) for $n = 80$ took approximately a day to compute, while the 
computations to solve (\ref{opt:lower}) for $n = 50$ and $N = 500$ took around twelve days. Still, the exponential nature of the 
problem had become noticable around the values of $(n,N)$ we used. Therefore, we do not expect (\ref{opt:approx2}) and (\ref{opt:lower}) 
to be feasibly solvable for significantly finer gauges, disallowing major improvements on the bounds we have provided by simply 
solving (\ref{opt:approx2}) and (\ref{opt:lower}) for larger $n,N$.

That being said, it may be still possible to extract even stronger lower bounds within our framework. Lower bounds we may 
derive from solutions of (\ref{opt:approx2}) currently depend on the proof of Theorem \ref{thm:easyApprox}. For fixed $n$, 
our estimation of how much the value of (\ref{opt:approx2}) overestimates $\alpha$ is $2/(n-1) + (5n-6)/n^2$. For $n = 80$, this error 
estimate is $\lesssim .0869$, which means that our computations for (\ref{opt:approx2}) can only certify a lower bound of $.5210$. 
However, Figure \ref{fig:n80UB} and Figure \ref{fig:comp} hint that the actual error might in fact be much smaller than our 
estimate. Improving this estimate could then help certify stronger lower bounds on the revenue guarantees of the ERM 
mechanism with two samples.

There is also the question of what the minimal revenue curves for the ERM mechanism with two samples, conditional 
on $q_{OPT}$, actually look like. Figure \ref{fig:n80ex} and Figure \ref{fig:comp} suggest that $3$-piecewise linear functions 
may be \emph{close} to minimality, but we are currently unable to discern if they actually provide minimal instances. The 
separation of the two lines on $[0,.2]$ in Figure \ref{fig:comp} hint that the answer is \emph{no}, but the lines might have diverged 
simply because our search grid for $3$-piecewise linear functions was not sufficiently fine. Even if this is the case, however, it might 
be that the \emph{minimum} revenue curve is $3$-piecewise linear. 

Finally, we note that our formulation should extend naturally to the setting with $\geq 3$ samples. However, in such an extension, 
the number of binary variables would blow up exponentially as the number of samples increases for fixed number of intervals, $n$. 
This implies that the extension of (\ref{opt:approx2}) and (\ref{opt:lower}) to a setting with $\geq 3$ samples might not be feasibly 
solvable. Still, for settings in which the performance of solvers do not depreciate too much, our techniques should be readily applicable.

\bibliographystyle{acm}
\bibliography{bibliography}

\begin{thebibliography}{10}

\bibitem{ABGMMY17}
{\sc Alon, N., Babaioff, M., Gonczarowski, Y.~A., Mansour, Y., Moran, S., and
  Yehudayoff, A.}
\newblock Submultiplicative {G}livenko-{C}antelli and uniform convergence of
  revenues.
\newblock In {\em Proceedings of the 31st International Conference on Neural
  Information Processing Systems\/} (2017), NIPS'17, pp.~1655--1664.

\bibitem{BGMM18}
{\sc Babaioff, M., Gonczarowski, Y.~A., Mansour, Y., and Moran, S.}
\newblock Are two (samples) really better than one?
\newblock In {\em Proceedings of the 2018 ACM Conference on Economics and
  Computation\/} (2018), EC'18, p.~175.

\bibitem{CR14}
{\sc Cole, R., and Roughgarden, T.}
\newblock The sample complexity of revenue maximization.
\newblock In {\em Proceedings of the 46th Annual ACM Symposium on Theory of
  Computing\/} (2014), STOC'14, pp.~243--252.

\bibitem{DZ20}
{\sc Daskalakis, C., and Zampetakis, M.}
\newblock More revenue from two samples via factor revealing {SDP}s.
\newblock In {\em Proceedings of the 21st ACM Conference on Economics and
  Computation\/} (2020), EC'20, pp.~257--272.

\bibitem{DHP16}
{\sc Devanur, N.~R., Huang, Z., and Psomas, C.-A.}
\newblock The sample complexity of auctions with side information.
\newblock In {\em Proceedings of the 48th Annual ACM Symposium on Theory of
  Computing\/} (2016), STOC'16, pp.~426--439.

\bibitem{DRY10}
{\sc Dhangwatnotai, P., Roughgarden, T., and Yan, Q.}
\newblock Revenue maximization with a single sample.
\newblock In {\em Proceedings of the 11th ACM Conference on Electronic
  Commerce\/} (2010), EC'10, pp.~129--138.

\bibitem{FILS15}
{\sc Fu, H., Immorlica, N., Lucier, B., and Strack, P.}
\newblock Randomization beats second price as a prior-independent auction.
\newblock In {\em Proceedings of the 16th ACM Conference on Economics and
  Computation\/} (2015), EC'15, p.~323.

\bibitem{GN17}
{\sc Gonczarowski, Y.~A., and Nisan, N.}
\newblock Efficient empirical revenue maximization in single-parameter auction
  environments.
\newblock In {\em Proceedings of the 49th Annual ACM SIGACT Symposium on Theory
  of Computing\/} (2017), STOC'17, pp.~856--868.

\bibitem{GW18}
{\sc {Gonczarowski}, Y.~A., and {Weinberg}, S.~M.}
\newblock The sample complexity of up-to-$\epsilon$ multi-dimensional revenue
  maximization.
\newblock In {\em IEEE 59th Annual Symposium on Foundations of Computer
  Science\/} (2018), FOCS'18, pp.~416--426.

\bibitem{GHZ19}
{\sc Guo, C., Huang, Z., and Zhang, X.}
\newblock Settling the sample complexity of single-parameter revenue
  maximization.
\newblock In {\em Proceedings of the 51st Annual ACM SIGACT Symposium on Theory
  of Computing\/} (2019), STOC'19, pp.~662--673.

\bibitem{HYR15}
{\sc Huang, Z., Mansour, Y., and Roughgarden, T.}
\newblock Making the most of your samples.
\newblock In {\em Proceedings of the 16th ACM Conference on Economics and
  Computation\/} (2015), EC'15, pp.~45--60.

\bibitem{Myerson81}
{\sc Myerson, R.~B.}
\newblock Optimal auction design.
\newblock {\em Mathematics of Operations Research 6}, 1 (1981), 1--158.

\bibitem{RS16}
{\sc Roughgarden, T., and Schrijvers, O.}
\newblock Ironing in the dark.
\newblock In {\em Proceedings of the 2016 ACM Conference on Economics and
  Computation\/} (2016), EC'16, pp.~1--18.

\end{thebibliography}

\end{document}